\newif\ifarxiv
\renewenvironment{isabelle}{%
  \bigbreak\noindent\hspace{\parindent}%
  \begin{minipage}{\textwidth-\parindent}
  \begin{isabellebody}%
  \begin{tabbing}%
}{%
  \end{tabbing}%
  \end{isabellebody}%
  \end{minipage}%
  \bigbreak%
}
\renewcommand{\isanewline}{\\}
  \title{OpSets: Sequential Specifications for Replicated Datatypes (Extended Version)}
  \titlerunning{OpSets: Sequential Specifications for Replicated Datatypes}
  \title{OpSets: Sequential Specifications for Replicated Datatypes}
\author{Martin Kleppmann}{Computer Laboratory, University of Cambridge, UK}{mk428@cl.cam.ac.uk}{https://orcid.org/0000-0001-7252-6958}{}
\author{Victor B.\ F.\ Gomes}{Computer Laboratory, University of Cambridge, UK}{vb358@cl.cam.ac.uk}{https://orcid.org/0000-0002-2954-4648}{}
\author{Dominic P.\ Mulligan}{Security Research Group, Arm Research, Cambridge, UK}{Dominic.Mulligan@arm.com}{https://orcid.org/0000-0003-4643-3541}{}
\author{Alastair R.\ Beresford}{Computer Laboratory, University of Cambridge, UK}{arb33@cl.cam.ac.uk}{https://orcid.org/0000-0003-0818-6535}{}
\authorrunning{M.\ Kleppmann, V.\,B.\,F.\ Gomes, D.\,P.\ Mulligan, and A.\,R.\ Beresford} 
\subjclass{%
\ccsdesc[500]{Networks~Protocol testing and verification};
\ccsdesc[500]{Networks~Formal specifications};
\ccsdesc[300]{Theory of computation~Distributed algorithms};
\ccsdesc[300]{Computer systems organization~Distributed architectures};
\ccsdesc[300]{Software and its engineering~Distributed systems organizing principles};
\ccsdesc[300]{Software and its engineering~Formal software verification}
}
\keywords{replication; conflict-free replicated datatypes; distributed systems; specification; formal verification}
\begin{document}
\maketitle

\begin{abstract}
We introduce OpSets, an executable framework for specifying and reasoning about the semantics of replicated datatypes that provide eventual consistency in a distributed system, and for mechanically verifying algorithms that implement these datatypes.
Our approach is simple but expressive, allowing us to succinctly specify a variety of abstract datatypes, including maps, sets, lists, text, graphs, trees, and registers.
Our datatypes are also composable, enabling the construction of complex data structures.
To demonstrate the utility of OpSets for analysing replication algorithms, we highlight an important correctness property for collaborative text editing that has traditionally been overlooked; algorithms that do not satisfy this property can exhibit awkward interleaving of text.
We use OpSets to specify this correctness property and prove that although one existing replication algorithm satisfies this property, several other published algorithms do not.
We also show how OpSets can be used to develop new replicated datatypes: we provide a simple specification of an atomic move operation for trees, an operation that had previously been thought to be impossible to implement without locking.
We use the Isabelle/HOL proof assistant to formalise the OpSets approach and produce mechanised proofs of correctness of the main claims in this paper, thereby eliminating the ambiguity of previous informal approaches, and ruling out reasoning errors that could occur in handwritten proofs.
\end{abstract}
\clearpage

\section{Introduction}

A common requirement across many distributed systems is that several nodes may concurrently access and manipulate some shared data structure.
Examples include everything from journalists using their laptops to work on a shared text document to a set of web servers manipulating a common database.
In doing so, it is important that the shared data satisfies certain \emph{consistency guarantees}.
For example, \emph{strong consistency models} such as serializability \cite{Kleppmann:2017wj} or linearizability \cite{Herlihy:1990jq} make a system behave like a single sequentially executing node, even when it is in fact replicated and concurrent.
An unavoidable downside of these models is that any operation or transaction must wait for network communication before it is allowed to complete \cite{Davidson:1985hv,Gilbert:2002il}.
Thus, in a system with strong consistency, a node cannot make progress while it is offline or partitioned from other nodes.

On the other hand, \emph{eventual consistency} \cite{Bailis:2013jc,Burckhardt:2014hy,Terry:1994fp,Vogels:2009ca} allows each participant to modify a local copy (\emph{replica}) of a shared data structure while offline, but its definition is very weak: \emph{``if no new updates are made to the shared state, all nodes will eventually have the same data.''}
The premise \emph{``if no new updates are made''} may never be true if the shared state is continually modified (i.e.\ the system is never quiescent).
Moreover, nothing in the definition of eventual consistency specifies which final states are legal.

Conflict-free Replicated Data Types, or CRDTs \cite{Shapiro:2011wy,Shapiro:2011un}, are abstractions for replicated state that have received significant attention in recent years (see \S~\ref{sec:relwork}).
The primary correctness property for CRDTs is \emph{convergence} \cite{Shapiro:2011un,Gomes:2017gy}, defined as: \emph{``whenever any two replicas have applied the same set of updates, they are in the same state''}, even if each replica applies the updates in a different order.
Convergence is a stronger property than eventual consistency, but it also fails to define what exactly the converged state should be.

In this work we introduce \emph{Operation Sets} (or \emph{OpSets} for short), a novel approach for specifying the semantics of replicated datatypes, and for reasoning about algorithms for concurrent data access and manipulation.
We go beyond merely ensuring replica convergence: the OpSets approach is an executable specification that precisely defines the permitted states of a replica after some set of updates has been applied.
Our contributions in this paper are as follows:

\begin{itemize*}
\item In \S~\ref{sec:approach} we introduce the OpSet, which provides a simple abstraction for specifying and reasoning about the consistency properties of concurrently editable data structures.

\item On top of this abstraction, in \S~\ref{sec:datatypes} and \S~\ref{sec:tree}, we specify a variety of composable abstract datatypes (maps, sets, lists, text, graphs, trees, and registers), and we argue that our specifications are both simple and precise, making them a suitable tool for reasoning about replicated data.

\item In \S~\ref{sec:bad-merge} we demonstrate how the OpSet abstraction can be used to reason about existing algorithms.
We highlight an important correctness property for collaborative text editing that has been overlooked by prior work in this area.
Our specification is, to our knowledge, the first that correctly captures this property.
We then review a selection of text editing CRDTs from the literature, prove that one satisfies our specification, and identify several others that fail to satisfy our correctness property.

\item In \S~\ref{sec:tree} we show how the OpSet abstraction can be used to develop new replicated datatypes. In particular we describe, for the first time, how an atomic move operation can be defined for a tree CRDT.
This operation can be used to move a subtree to a new position within the tree, or to rename a key in a map, or to reorder items in a list.
The OpSets approach enables a simple definition of this operation that had previously been thought impossible to implement without locking \cite{Najafzadeh:2017vk,Najafzadeh:2018bw}.

\item Using the Isabelle/HOL proof assistant~\cite{DBLP:conf/tphol/WenzelPN08} we formalise the OpSets approach, producing mechanised proofs of correctness of the main claims in this paper.
In particular, we prove that our list specification is strictly stronger than the recent specification of collaborative text editing by Attiya et al. \cite{Attiya:2016kh}.
By using mechanised proofs we eliminate the ambiguity of previous informal approaches, and rule out reasoning errors that could occur in handwritten proofs.
Moreover, the proof framework we have developed is reusable and can be leveraged to verify other datatypes in the future.
\end{itemize*}

\ifarxiv
  The appendices contain
\else
  The extended version of this paper \cite{ExtendedVersion} contains
\fi
an overview of our Isabelle/HOL mechanisation, and pseudocode for the replicated datatypes discussed in this paper.
The full formal proof development is published in the Isabelle Archive of Formal Proofs \cite{AFP}.

\section{The OpSets Approach}\label{sec:approach}

The OpSets approach is a simple abstraction for describing the consistency properties of a replicated data system.
We outline the general approach in this section, before describing concrete data structures and specifications in \S~\ref{sec:datatypes} and \S~\ref{sec:tree}.

\subsection{System Model}\label{sec:system-model}

We assume that the system consists of a set of \emph{nodes} connected by a \emph{network}.
These nodes concurrently access some \emph{shared data structure}, which may be a relational database (consisting of rows in tables), a text document (a sequence of characters), a vector-graphics document (a tree of records describing graphical objects), a filesystem (a tree of directories and files), or any other kind of data structure.

New nodes can be added at any time, and the set of nodes need not be known in advance.
Nodes might be mobile devices, and hence we assume that nodes are sometimes \emph{offline}, i.e.\ temporarily unable to communicate with other nodes.
We require that nodes can access the shared data anytime, even while offline.
Thus, each node has a local copy of the shared data structure, which it can read and modify without waiting for any communication or coordination with other nodes.

Whenever a node makes a modification to that structure, it records the change as an \emph{operation}.
For example, an operation may describe an insertion at a particular position in a text document.
Each node locally maintains a set of operations, the \emph{OpSet}.
Whenever a node makes a change to the shared data, it adds the corresponding operation to its OpSet, and also sends \emph{messages} containing the operation to other nodes.
Whenever a node receives a message from another node, the operation in that message is added to the recipient's local OpSet.
Operations remain immutable throughout this process.

We make no assumptions about the reliability of the network: messages may be lost, duplicated, or arbitrarily reordered.
Reflecting the characteristics of real networks, we assume that lost messages are retransmitted when possible (e.g.\ using TCP), but messages may be permanently lost due to network or node failures.
Since the OpSet at each node is a monotonically growing set of operations, any two communicating nodes can merge their OpSets using the standard set union operator $\cup$.
Set union is commutative, associative, and idempotent, ensuring that communicating nodes converge towards the same OpSet contents.

We assume that each operation has a unique identifier (ID), that new IDs can be generated by any node without communication with other nodes, and that we have a total ordering on operation IDs.
These requirements can easily be met by using Lamport timestamps \cite{Lamport:1978jq} as IDs.
A Lamport timestamp is a pair $(\mathit{counter}, \mathit{nodeID})$ that is constructed as follows:
\begin{itemize}
\item $\mathit{counter}$ is an integer.
    To generate a new ID, find the maximum counter of any existing operation ID in the local OpSet, and increment that number.
\item $\mathit{nodeID}$ is a string that uniquely identifies the node generating the ID, e.g.\ a UUID \cite{Leach:2005hm}.
\end{itemize}

Although different nodes may generate IDs with the same counter value, each node generates IDs with strictly monotonically increasing counter values, and thus IDs are globally unique.
We define the total order on IDs as being the lexicographic order:
\[
    (\mathit{ctr}_1, \mathit{node}_1) < (\mathit{ctr}_2, \mathit{node}_2)
    \;\Longleftrightarrow\;
    \mathit{ctr}_1 < \mathit{ctr}_2 \;\vee\;
    (\mathit{ctr}_1 = \mathit{ctr}_2 \wedge \mathit{node}_1 <\mathit{node}_2).
\]


\subsection{Interpreting an OpSet}\label{sec:op-serial}

Most definitions of operation-based CRDTs describe how a node's local state is manipulated by operations \cite{Shapiro:2011wy,Shapiro:2011un}.
We now depart from this convention and present an alternative formulation of replicated datatypes.

In the OpSets approach, we require that the shared data structure is never manipulated directly.
Instead, we use an \emph{interpretation function} $\llbracket-\rrbracket$ that takes an OpSet $O$ and returns the current state $\llbracket O \rrbracket$ of the shared data structure described by the OpSet.
The interpretation function is \emph{pure}, i.e.\ deterministic, side-effect free, and its result depends only on $O$.
All nodes in the system employ the same interpretation function.

Consequently, whenever any two nodes have the same OpSet $O$, their view of the shared data structure $\llbracket O \rrbracket$ must also be equal.
This construction trivially ensures eventual consistency: as two nodes converge towards the same OpSet contents, any data structure that is deterministically derived from the OpSet must also converge.

In principle, any deterministic function can serve as interpretation function.
However, in defining the semantics of CRDTs (see \S~\ref{sec:datatypes} and \S~\ref{sec:tree}), we have found it useful to specialise $\llbracket-\rrbracket$ such that we can interpret one operation at a time.

Let the OpSet $O$ be a set of pairs $(\mathit{id},\, \mathit{op})$, where $\mathit{id}$ is a unique operation identifier and $\mathit{op}$ is an arbitrary description of the change that occurred.
Assume that we have a total ordering $<$ on identifiers, as explained in \S~\ref{sec:system-model}.
Then observe that for any OpSet there exists a unique sequence of operations, containing all operations of the OpSet in ascending order of their identifier.
We can specify the semantics of each operation~--- that is, the effect of the operation on the OpSet interpretation~--- when applied in this sequential order.

Formally, we can define the interpretation $\llbracket O \rrbracket$ of the OpSet $O$ as follows:
\begin{align*}
    \big\llbracket \emptyset \big\rrbracket &= \mathsf{InitialState} \\
    \big\llbracket O \;\cup\; \{(\mathit{id},\, \mathit{op})\} \big\rrbracket &=
    \mathsf{interp}\big[\llbracket O \rrbracket,\, (\mathit{id},\, \mathit{op})\big]
    \qquad\text{ provided that } \forall\,(\mathit{id}',\, \mathit{op}') \in O.\; \mathit{id}' < \mathit{id}
\end{align*}
where $\mathsf{interp}\big[S,\, (\mathit{id},\, \mathit{op})\big]$ is the interpretation of the operation $(\mathit{id},\, \mathit{op})$ in the state $S$, and $\mathsf{InitialState}$ is a fixed minimal element (e.g. the empty tree, or empty list) of the replicated type described.
In other words, if $S$ is the result of interpreting all operations with identifiers less than $\mathit{id}$, then
$\mathsf{interp}\big[S,\, (\mathit{id},\, \mathit{op})\big]$ is the interpretation of the OpSet to which $(\mathit{id},\, \mathit{op})$ has been added.
For example, if $\mathit{id}_1 < \mathit{id}_2 < \mathit{id}_3$, we have:
\begin{align*}
    \big\llbracket \{(\mathit{id}_1,\ \mathit{op}_1),\;
    &(\mathit{id}_2,\ \mathit{op}_2),\,
    (\mathit{id}_3,\ \mathit{op}_3)\} \big\rrbracket \;=\\
    &\mathsf{interp}\big[\mathsf{interp}\big[\mathsf{interp}\big[\mathsf{InitialState},\,
    (\mathit{id}_1,\ \mathit{op}_1)\big],\,
    (\mathit{id}_2,\ \mathit{op}_2)\big],\,
    (\mathit{id}_3,\ \mathit{op}_3)\big]
\end{align*}
Provided that the operation interpretation $\mathsf{interp}\big[S,\, (\mathit{id},\, \mathit{op})\big]$ is deterministic, the OpSet interpretation function $\llbracket-\rrbracket$ is also deterministic, due to the fact that the operation order in the OpSet is unique.

\subsection{Receiving Messages Out-of-order}\label{sec:order-change}

Many computing systems are based on the idea of putting operations in some total order, and executing them in that order.
For example, serializable transactions \cite{Kleppmann:2017wj} and state machine replication \cite{Schneider:1990vy} follow this approach.
However, it is important to understand that the OpSet interpretation of \S~\ref{sec:op-serial} relies on a weaker notion of ordering than most systems.

With serializable transactions and state machine replication, once a transaction/operation has been executed in some state, its results are expected to be durable.
Thus, before executing some transaction $T_i$, the system needs to ensure that there is no pending transaction with a lower ID than $T_i$ (which would need to be executed before $T_i$), since otherwise the subsequent arrival of a transaction with lower ID would invalidate the state in which $T_i$ was executed.
However, ensuring this precondition is expensive: as we show in \S~\ref{sec:op-sequences}, it requires communication with at least a quorum of nodes; if the IDs are Lamport timestamps, it even requires communication with every single node \cite{Lamport:1978jq}.
If too many nodes are offline, the system cannot execute any transactions.

By contrast, our system model of \S~\ref{sec:system-model} requires nodes to always be able to read and modify the shared data, even when all nodes are offline.
Moreover, we do not assume any ordering guarantees from the network.
Thus, whenever there is some operation $(\mathit{id}_1, \mathit{op}_1) \in O$ in the OpSet $O$ of some node, it is possible that the node will subsequently receive a message containing $(\mathit{id}_2, \mathit{op}_2)$, where $\mathit{id}_2 < \mathit{id}_1$; that is, the later-arriving operation needs to be applied before the existing operation $(\mathit{id}_1, \mathit{op}_1)$ in the OpSet interpretation $\llbracket O \rrbracket$.

In the OpSet model, such out-of-order delivery of operations is no problem: the order in which operations are received has no effect on the OpSet $O$, and since we assume the interpretation function to be pure and side-effect free, the interpretation $\llbracket O \rrbracket$ can always be recomputed whenever new operations are added to $O$.

The interpretation function is an \emph{executable specification} that defines the expected result of interpreting a set of operations.
Presenting replicated datatypes in this manner has two significant advantages:
\begin{enumerate*}
\item
Unlike typical definitions of CRDT algorithms \cite{Shapiro:2011wy,Shapiro:2011un}, it is not necessary for the interpretation function $\mathsf{interp}\big[S,\, (\mathit{id},\, \mathit{op})\big]$ to commute with respect to other operations: any pure function can be used.
This fact makes it much simpler to specify the interpretation of operations, as we shall see in \S~\ref{sec:datatypes} and \S~\ref{sec:tree}.
\item
We can guarantee the existence of an implementation of each described datatype: the specification itself.
This is in contrast to axiomatic specifications, which may not be implementable, and require additional work to demonstrate than an implementation exists which satisfies the axiomatic description.
\end{enumerate*}

For practical implementations of replicated datatypes, a naive OpSet interpretation may exhibit poor performance, since nodes must potentially apply the same subset of operations repeatedly.
More efficient (and, most likely, more complex) algorithms for CRDTs can therefore be developed and shown to satisfy the OpSet-based specification---we do this in \S~\ref{sec:bad-merge}.

However, we have developed a practical JavaScript CRDT implementation around the OpSet model \cite{Automerge}, and found it to have some advantages: for example, users can easily inspect the editing history of a document, since every past version of the document is the interpretation of a particular subset of operations.
Moreover, using OpSets provides a straightforward mechanism for recovering from network partitions and failures, as missing operations may be retransmitted and added to the OpSets of previously partitioned nodes.
The details of this implementation are beyond the scope of this paper.

\section{Specifying a Graph of Lists, Maps, and Registers}\label{sec:datatypes}

We now make the OpSets approach concrete by defining example semantics for commonly-used data structures: maps (which associate values with user-specified keys) and lists (linear sequences of values).
The map datatype can also represent a set (by using keys as members of the set, and ignoring values).
The list datatype can also represent text (by mapping each character to a list element).
In both lists and maps the values may be primitives (such as numbers or strings), or references to other map or list objects.
Using these references we can construct arbitrary object graphs, including cycles of object references, like in object-oriented programming languages.
In \S~\ref{sec:tree} we will show how to restrict this object graph so that it conforms to a tree structure.

We treat each key of a map, and each element of a list, as a multi-value register.
That is, if there are several concurrent assignments to the same map key or list element, our datatype preserves all concurrently written values.
Thus, reading a map key or list element may return multiple values, which may be merged explicitly by the user.
Assigning a new value to a map key or list element overwrites all causally preceding values.
Different register behaviour, such as last-writer-wins (arbitrarily picking one of the concurrently written values as winner), can easily be defined, as we show later.

\subsection{Generating Operations}\label{sec:datatypes-gen}

An OpSet for these datatypes may contain six types of operation:
\begin{itemize}
    \item $(\mathit{id},\, \mathsf{MakeMap})$ creates a new, empty map object that is identified by $\mathit{id}$.
    \item $(\mathit{id},\, \mathsf{MakeList})$ creates a new, empty list object that is identified by $\mathit{id}$.
    \item $(\mathit{id},\, \mathsf{MakeVal}(\mathit{val}))$ associates the ID $\mathit{id}$ with the primitive value $\mathit{val}$ (e.g.\ a number, string, or boolean).
        This operation is used to ``wrap'' any primitive value, allowing $\mathsf{Assign}$ operations (see below) to always use IDs as values, regardless of whether the value is a primitive value, or a reference to a map or list object.
    \item $(\mathit{id},\, \mathsf{InsertAfter}(\mathit{ref}))$ creates a new list element with ID $\mathit{id}$, and inserts it into a list.
        If $\mathit{ref}$ is the ID of a prior $\mathsf{MakeList}$ operation, then the new element is inserted at the head of that list.
        Otherwise $\mathit{ref}$ must be the ID of an existing list element (i.e.\ a prior $\mathsf{InsertAfter}$ operation), in which case the new list element is inserted immediately after the referenced list element.
        Note that the $\mathsf{InsertAfter}$ operation does not associate a value with the new list element; that is done by a subsequent $\mathsf{Assign}$ operation.
    \item $(\mathit{id},\, \mathsf{Assign}(\mathit{obj}, \mathit{key}, \mathit{val}, \mathit{prev}))$ assigns a new value to a key within a map (if $\mathit{obj}$ is the ID of a prior $\mathsf{MakeMap}$ operation), or to a list element (if $\mathit{obj}$ is the ID of a prior $\mathsf{MakeList}$ operation).
        In the case of map assignment, $\mathit{key}$ is the user-specified key to be updated, which may be any primitive value such as a string or integer.
        In the case of a list, $\mathit{key}$ is the ID of the list element to be updated (i.e.\ the ID of a prior $\mathsf{InsertAfter}$ operation).
        $\mathit{val}$ is the ID of the value being assigned, which may identify a $\mathsf{MakeMap}$, $\mathsf{MakeList}$, or $\mathsf{MakeVal}$ operation.
        $\mathit{prev}$ is the set of IDs of prior $\mathsf{Assign}$ operations to the same key in the same object, which are overwritten by the present operation.
    \item $(\mathit{id},\, \mathsf{Remove}(\mathit{obj}, \mathit{key}, \mathit{prev}))$ removes a key-value pair from a map, or an element from a list.
        As with $\mathsf{Assign}$, $\mathit{obj}$ is the ID of the prior $\mathsf{MakeMap}$ or $\mathsf{MakeList}$ operation that created the object being updated, and $\mathit{key}$ identifies the key or list element being removed.
        $\mathit{prev}$ is the set of IDs of prior $\mathsf{Assign}$ operations to the same key in the same object, which are removed by the present operation.
\end{itemize}

\ifarxiv
  \noindent
  Pseudocode for generating these operations is given in Appendix~\ref{sect:appendix:generating-ops}.
\else
  \noindent
  Pseudocode for generating these operations is provided in an appendix of the extended version of this paper \cite{ExtendedVersion}.
\fi

\subsection{Interpreting Operations}\label{sec:datatypes-interp}

We use the sequential OpSet interpretation given in \S~\ref{sec:op-serial}.
To encode the current state of map and list data structures we use a pair of relations $(E,\, L)$:

\begin{description}
    \item[The element relation $E \subseteq (\mathrm{ID} \times \mathrm{ID} \times (\mathrm{ID} \cup \mathrm{Key}) \times \mathrm{ID})$]
        is a set of 4-tuples containing the values currently assigned to map keys and list elements.
        If $(\mathit{id}, \mathit{obj}, \mathit{key}, \mathit{val}) \in E$, then an $\mathsf{Assign}$ operation with ID $\mathit{id}$ updated the object with ID $\mathit{obj}$, assigning the value with ID $\mathit{val}$ to the map key or list element $\mathit{key}$.
        If $\mathit{obj}$ references a list object, $\mathit{key}$ is the ID of an element in the list relation $L$ (see below).
        If $\mathit{obj}$ references a map object, any primitive value such as string or integer may be used as $\mathit{key}$.
    \item[The list relation $L \subseteq (\mathrm{ID} \times (\mathrm{ID} \cup \{\bot\}))$] is a set of pairs that indicates the order of list elements.
        If $(\mathit{prev}, \mathit{next}) \in L$, that means the list element with ID $\mathit{prev}$ is immediately followed by the list element with ID $\mathit{next}$.
        We use $(\mathit{last}, \bot) \in L$ to indicate that list element $\mathit{last}$ has no successor.
        To indicate that $\mathit{head}$ is the first element in the list $\mathit{obj}$ (i.e.\ $\mathit{obj}$ is the ID of the $\mathsf{MakeList}$ operation that created the list) we have $(\mathit{obj}, \mathit{head}) \in L$.
\end{description}

\noindent
Initially, both relations are empty; that is, we have $\llbracket\emptyset\rrbracket = \mathsf{InitialState} = (\emptyset,\, \emptyset)$.
We can then define the interpretation of the six operation types as follows:
\begin{align*}
    \mathsf{interp}\big[(E,\, L),\; &(\mathit{id},\, \mathsf{Assign}(\mathit{obj}, \mathit{key}, \mathit{val}, \mathit{prev})) \big] \;=\\
    &\Big( \big\{ (\mathit{id}', \mathit{obj}', \mathit{key}', \mathit{val}') \in E \mid
    \mathit{id}' \notin \mathit{prev} \big\} \;\cup\;
    \big\{ (\mathit{id}, \mathit{obj}, \mathit{key}, \mathit{val}) \big\},\; L \Big) \\[5pt]
    \mathsf{interp}\big[(E,\, L),\; &(\mathit{id},\, \mathsf{Remove}(\mathit{obj}, \mathit{key}, \mathit{prev})) \big] \;=\\
    &\Big( \big\{ (\mathit{id}', \mathit{obj}', \mathit{key}', \mathit{val}') \in E \mid
    \mathit{id}' \notin \mathit{prev} \big\},\; L \Big) \\[5pt]
    \mathsf{interp}\big[(E,\, L),\; &(\mathit{id},\, \mathsf{InsertAfter}(\mathit{ref})) \big] \;=\\
    &\left\{
        \arraycolsep=0pt \def\arraystretch{1.2}
        \begin{array}{ll}
            (E,\, L) & \text{if } \nexists\,n.\; (\mathit{ref},\, n) \in L\\[3pt]
            \Big(E,\; \big\{ (p,n) &\;\in L \mid p \neq \mathit{ref} \big\} \;\cup\;
            \big\{ (\mathit{ref}, \mathit{id}) \big\} \;\cup\;
            \big\{ (\mathit{id}, n) \mid (\mathit{ref}, n) \in L \big\} \Big)\\
            & \text{if } \exists\,n.\; (\mathit{ref},\,n) \in L
        \end{array} \right. \\[5pt]
    \mathsf{interp}\big[(E,\, L),\; &(\mathit{id},\, \mathsf{MakeList}) \big] \hspace{21.1pt}=\;
    \Big( E,\; L \;\cup\; \big\{(\mathit{id},\, \bot)\big\}\Big) \\[1pt]
    \mathsf{interp}\big[(E,\, L),\; &(\mathit{id},\, \mathsf{MakeMap}) \big] \hspace{17.6pt}=\; (E,\; L) \\[5pt]
    \mathsf{interp}\big[(E,\, L),\; &(\mathit{id},\, \mathsf{MakeVal}(\mathit{val})) \big] \;=\; (E,\; L)
\end{align*}

The interpretation of $\mathsf{Assign}$ and $\mathsf{Remove}$ updates only $E$ and leaves $L$ unchanged; conversely, the interpretation of $\mathsf{InsertAfter}$ and $\mathsf{MakeList}$ updates only $L$.
Both the $\mathsf{Assign}$ and $\mathsf{Remove}$ interpretations remove any tuples from causally prior assignments (those whose IDs appear in $\mathit{prev}$), but leave any tuples from concurrent assignments unchanged.
This is the behaviour of a multi-value register; if a last-writer-wins register is required, the condition $\mathit{id}' \notin \mathit{prev}$ can be changed to $\mathit{obj}' \neq \mathit{obj} \;\vee\; \mathit{key}' \neq \mathit{key}$, which removes any existing tuples with the same object ID and key.

The interpretation of $\mathsf{InsertAfter}$ resembles the insertion into a linked list, as illustrated in Figure~\ref{fig:list-insert}.
For example, to interpret $(\mathit{id},\, \mathsf{InsertAfter}(\mathit{ref}))$, if we have $(\mathit{ref}, \mathit{next}) \in L$, we remove the pair $(\mathit{ref}, \mathit{next})$ from L, and add the pairs $(\mathit{ref}, \mathit{id})$ and $(\mathit{id}, \mathit{next})$ to $L$.
Thus, the new list element $\mathit{id}$ is inserted between $\mathit{ref}$ and $\mathit{next}$.

\begin{figure}
\centering
\begin{tikzpicture}
  \tikzstyle{every node}=[anchor=base,minimum width=6mm,text height=7pt,text depth=2pt,font=\footnotesize]
  \node [anchor=west] at (-7,1.7) {Before:};
  \node [anchor=west] at (-7,0.3) {$L = \{ (2, 13),\; (13, 5),\; (5, 23),\; (23, \bot) \}$};
  \node [anchor=west] at (0,1.7) {After adding $(25,\, \mathsf{InsertAfter}(13))$ to OpSet:};
  \node [anchor=west] at (0,-0.7) {$L' = L \;-\; \{(13, 5)\} \;\cup\; \{(13, 25),\; (25, 5)\}$};
    \matrix [column sep={6mm,between origins},nodes=draw,matrix anchor=west] at (-7,1) {
    \node (l1a) {2};  & \node (l1b) {13}; &&
    \node (l2a) {13}; & \node (l2b) {5};  &&
    \node (l3a) {5};  & \node (l3b) {23}; &&
    \node (l4a) [draw=none] {None}; \\
  };
  \draw [->] (l1b) -- (l2a);
  \draw [->] (l2b) -- (l3a);
  \draw [->] (l3b) -- (l4a);
  \matrix [column sep={6mm,between origins},nodes=draw,matrix anchor=west] at (1,0) {
    \node (n1) {13}; & \node (n2) {25}; &&
    \node (n3) {25}; & \node (n4) {5}; \\
  };
  \matrix [column sep={6mm,between origins},nodes=draw,matrix anchor=west] at (0,1) {
    \node (r1a) {2};  & \node (r1b) {13}; &&&&&
    \node (r3a) {5};  & \node (r3b) {23}; &&
    \node (r4a) [draw=none] {None}; \\
  };
  \draw [->] (r1b.east) .. controls (2.7,1) and (-0.3,0) .. (n1.west);
  \draw [->] (n2) -- (n3);
  \draw [->] (n4.east) .. controls (5.6,0) and (2.3,1) .. (r3a.west);
  \draw [->] (r3b) -- (r4a);
\end{tikzpicture}
\caption{Illustration of the interpretation of an $\mathsf{InsertAfter}$ operation.}\label{fig:list-insert}
\end{figure}

Note that $L$ never shrinks, it only ever grows through interpreting $\mathsf{InsertAfter}$ operations.
When a list element is removed by a $\mathsf{Remove}$ operation, the effect is that all values are removed from the list element in the element relation $E$, but the list element remains in $L$ as a \emph{tombstone}, so that any concurrent $\mathsf{InsertAfter}$ operations can still locate the referenced list position.
Thus, from a user's point of view a list element only exists if it has at least one associated value in the $E$ relation; any list elements without an associated value should be ignored.

\section{Discussion: Merging Text Edits}\label{sec:bad-merge}

The datatypes we have specified in \S~\ref{sec:datatypes} can support a wide range of applications.
For example, the list datatype can be used to implement a collaborative text editor: by treating the text as a list of individual characters, every edit can be expressed as a sequence of insertion or deletion operations on the list.

The problem of collaborative text editing has been studied extensively, using two main approaches: Operational Transformation and CRDTs.
We discuss this prior work in \S~\ref{sec:relwork}.
We will now highlight a scenario that, to our knowledge, has not been considered by any previous work on collaborative text editing.

Consider the execution illustrated in Figure~\ref{fig:bad-merge}.
In this example, two users are concurrently editing a text document that initially reads ``Hello!''.
The user on the left changes it to read ``Hello Alice!'', while concurrently the user on the right changes the document to read ``Hello Charlie!''.
When the concurrent edits are merged, the algorithm randomly interleaves the two insertions of ``~Alice'' and ``~Charlie'' character by character, resulting in an unreadable jumble of characters.

\begin{figure}
\centering
\begin{tikzpicture}
  \tikzstyle{box}=[rectangle,draw,inner xsep=6pt,text height=9pt,text depth=2pt]
  \tikzstyle{every path}=[draw,-{Stealth[length=3.5mm]}]
  \node [box] (start) at (3,4) {\texttt{Hello!}};
  \node [box] (left)  at (0,2) {\texttt{Hello Alice!}};
  \node [box] (right) at (6,2) {\texttt{Hello Charlie!}};
  \node [box] (merge) at (3,0) {\texttt{Hello Al Ciharcliee!}};
  \draw (start) to node [left,inner xsep=10pt,font=\footnotesize]  {Insert ``~Alice'' between ``o'' and ``!''} (left);
  \draw (start) to node [right,inner xsep=10pt,font=\footnotesize] {Insert ``~Charlie'' between ``o'' and ``!''} (right);
  \draw (left)  -- (merge);
  \draw (right) -- (merge);
  \node [text width=3cm,text badly centered,font=\footnotesize] at (3,1) {Merge concurrent edits};
\end{tikzpicture}
\caption{Two concurrent insertions at the same position are interleaved.}\label{fig:bad-merge}
\end{figure}

The problem is even worse if the concurrent insertions are not just a single word, but an entire paragraph or section.
In these cases, interleaving the users' insertions would most likely result in an entirely incomprehensible text that would have to be deleted and rewritten.
Even though the merge in Figure~\ref{fig:bad-merge} is so obviously undesirable, there is to our knowledge no formal specification of collaborative text editing that rules out such an interleaving of insertions.

\begin{theorem}\label{thm:attiya-allows-interleaving}
    The $\mathcal{A}_\textsf{strong}$ specification of collaborative text editing by Attiya et al. \cite{Attiya:2016kh} allows the outcome in Figure~\ref{fig:bad-merge}; that is, an algorithm that interleaves concurrent insertions at the same position may nevertheless satisfy the $\mathcal{A}_\textsf{strong}$ specification.
    Moreover, the text editing CRDT algorithms Logoot \cite{Weiss:2009ht,Weiss:2010hx} and LSEQ \cite{Nedelec:2016eo,Nedelec:2013ky} also allow this outcome.
\end{theorem}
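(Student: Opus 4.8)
The plan is to prove both claims by explicit construction of witnessing executions: to show that a specification or an algorithm \emph{allows} an outcome, it suffices to exhibit one concrete behaviour that produces the interleaving and that is admitted by the specification (respectively, producible by the algorithm). So the whole argument is a search for suitable witnesses rather than a structural induction.

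First I would fix the execution underlying Figure~\ref{fig:bad-merge} at the granularity of individual characters. Starting from the document \texttt{Hello!}, the left node inserts the characters of ``~Alice'' one at a time between ``o'' and ``!'', each new character referencing the previously inserted one, and symmetrically the right node inserts ``~Charlie''. This yields two insertion chains that share only the reference point ``o'' (together with the fact that every inserted element precedes ``!''); within each chain the order is fixed, but the two chains are mutually unordered. The interleaved string of Figure~\ref{fig:bad-merge} then corresponds to one particular linear extension of this partial order.

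To discharge the first claim I would recall the conditions of the $\mathcal{A}_\textsf{strong}$ specification of Attiya et al.\ \cite{Attiya:2016kh} and verify that the interleaved linear extension satisfies each of them. The crucial observation is that $\mathcal{A}_\textsf{strong}$ constrains the global list order only through the insertion-reference relation and the requirement that every replica's reads be consistent with a single global total order; it imposes no constraint forcing a run of characters inserted consecutively by one node to remain contiguous. Since the interleaved order is a linear extension of the reference partial order, and each node's pre-merge reads observe only its own chain (a sub-order of the global order), all conditions hold, and hence the interleaving is admitted. For the second claim I would recall the position-identifier schemes of Logoot \cite{Weiss:2009ht,Weiss:2010hx} and LSEQ \cite{Nedelec:2016eo,Nedelec:2013ky}, in which each list element carries an identifier from a dense total order and an insertion between neighbours $p$ and $q$ allocates an identifier strictly between those of $p$ and $q$. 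Because the two nodes allocate identifiers for their chains independently within the same open interval between ``o'' and ``!'', I would exhibit a concrete choice of identifiers that interleaves in the global identifier order; convergence of these algorithms then guarantees that the merged document displays exactly this interleaving.

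I expect the main obstacle to be the first claim: faithfully reproducing the full set of $\mathcal{A}_\textsf{strong}$ axioms and checking that the witness satisfies all of them, rather than only the obvious ordering condition---in particular ruling out any subtle visibility or causality clause that might secretly forbid interleaving. By contrast, the second claim is essentially a matter of presenting an allocation of identifiers, which the free (indeed randomised) allocation strategies of Logoot and LSEQ readily permit.
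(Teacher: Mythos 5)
Your proposal is correct and follows essentially the same route as the paper: the paper's proof is exactly this observation---that $\mathcal{A}_\textsf{strong}$, Logoot, and LSEQ all constrain the list order only through the predecessor--successor interval of each insertion, so concurrent insertions into the same interval may be ordered, and hence interleaved, arbitrarily---stated in compressed form rather than as an explicit witness construction plus axiom-by-axiom check. The only element you omit is supplementary rather than essential: the authors additionally report having tested open-source implementations of Logoot and LSEQ and observed the interleaving anomaly occurring in practice.
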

\begin{proof}
    Follows directly from the respective definitions, which are all based on the idea of assigning each character a position in a totally ordered identifier space, such that the order of identifiers corresponds to the order of characters in the document.
    When a new character is inserted, it is assigned an identifier that lies somewhere between the identifiers of its predecessor and successor.
    However, when concurrent insertions with the same predecessor and successor are performed, those insertions are ordered arbitrarily.
    Repeated insertions within the same predecessor-successor interval may thus be interleaved arbitrarily.

    We also performed tests with open source implementations of Logoot \cite{AhmedNacer:2011ke,ReplicationBenchmark} and LSEQ \cite{LSEQTree,Nedelec:2016eo}, and observed this interleaving anomaly occurring in practice.
\end{proof}

Rather than interleaving characters, a better approach to merging is to keep all insertions by a particular user together as a continuous sequence.
With this constraint, there are two acceptable merged results in the example of Figure~\ref{fig:bad-merge}: either ``Hello Alice Charlie!'' or ``Hello Charlie Alice!''.
The choice between these two outcomes is arbitrary, as there is no \emph{a priori} requirement for one user's insertions to come before the other's.

\begin{theorem}\label{thm:no-interleaving}
    The list specification from \S~\ref{sec:datatypes} does not allow interleaving of concurrent insertions.
    That is, if one user inserts a character sequence $\langle x_1, x_2, \dots, x_n \rangle$ and another user concurrently inserts a character sequence $\langle y_1, y_2, \dots, y_m \rangle$ at the same position, the merged document contains either the character sequence $\langle x_1, x_2, \dots, x_n, y_1, y_2, \dots, y_m \rangle$ or the character sequence $\langle y_1, y_2, \dots, y_m, x_1, x_2, \dots, x_n \rangle$ at the specified position.
\end{theorem}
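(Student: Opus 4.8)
The plan is to reduce the statement to a purely combinatorial claim about the list relation $L$ and then to prove that claim from the structure $\mathsf{InsertAfter}$ imposes on concurrent insertions. Since the order of characters in the document is determined entirely by $L$ (the values attached by $\mathsf{Assign}$ are irrelevant to ordering, and $L$ grows only through $\mathsf{InsertAfter}$), I would first discard the $\mathsf{Assign}$, $\mathsf{Remove}$ and $\mathsf{Make}$ operations and work only with the $\mathsf{InsertAfter}$ operations in the merged OpSet $O$. The scenario is captured formally by: $x_1$ and $y_1$ are both $\mathsf{InsertAfter}(\mathit{ref})$ for a common reference $\mathit{ref}$ (this is what ``at the same position'' means), while $x_i = \mathsf{InsertAfter}(x_{i-1})$ and $y_j = \mathsf{InsertAfter}(y_{j-1})$ for $i,j \ge 2$. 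Because a single node generates identifiers with strictly increasing counters, I may assume $x_1 < x_2 < \dots < x_n$ and $y_1 < y_2 < \dots < y_m$ in the total order on IDs, and by \S~\ref{sec:op-serial} the interpretation $\llbracket O \rrbracket$ applies these operations in ascending ID order.

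The main lemma I would establish is a characterization of the order induced by $L$ in terms of an \emph{insertion tree}: regard each list element as a node whose parent is its $\mathsf{InsertAfter}$ reference, order the children of every node in \emph{descending} order of their IDs, and traverse the resulting tree in pre-order. I claim that the sequence obtained by following the successor pointers of $L$ from $\mathit{ref}$ is exactly this pre-order traversal. I would prove this by induction on the prefix of operations applied in ID order: when $\mathsf{InsertAfter}(r)$ with identifier $\mathit{id}$ is interpreted, $\mathit{id}$ exceeds all previously applied IDs and is therefore the largest among the children of $r$ so far, so under the descending-order rule it becomes the \emph{first} child of $r$; correspondingly $\mathsf{interp}$ splices $\mathit{id}$ immediately after $r$ in $L$, which by the inductive hypothesis is precisely the point at which the pre-order traversal now resumes after $r$. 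Verifying that this splice preserves the correspondence --- in particular that $r$'s previous successor in $L$ is exactly where the traversal continues once the new subtree is exhausted --- is the delicate, and main, obstacle of the proof.

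Granting the characterization, the conclusion follows easily. Under the scenario no operation references an $x$-element other than the next one in the sequence, so $\langle x_1, \dots, x_n\rangle$ is a simple path in the insertion tree ($x_i$ is the unique child of $x_{i-1}$), and likewise $\langle y_1, \dots, y_m\rangle$ is a simple path; both paths hang off the common root $\mathit{ref}$, whose only two relevant children are $x_1$ and $y_1$. A pre-order traversal therefore emits an entire path before moving to the next child of $\mathit{ref}$, so it outputs $x_1, \dots, x_n$ consecutively and $y_1, \dots, y_m$ consecutively. The only remaining choice is the relative order of the two children of $\mathit{ref}$, which by the descending-ID rule depends solely on whether $x_1 < y_1$ or $y_1 < x_1$: the former yields $\langle y_1, \dots, y_m, x_1, \dots, x_n\rangle$ and the latter $\langle x_1, \dots, x_n, y_1, \dots, y_m\rangle$. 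In either case the two insertions appear as uninterleaved contiguous blocks, which is exactly the statement. As an alternative to the tree machinery, one could instead prove directly, by induction on the applied prefix, the invariant that the $x$-elements seen so far form a contiguous increasing run whose last element is the current maximum (and symmetrically for the $y$-elements), with the two runs adjacent; this avoids defining pre-order traversal but replaces it with an equally careful tracking of the successor of $L$ at the splice point.
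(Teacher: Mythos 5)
Your proposal is correct, but it takes a genuinely different route from the paper. The paper's own proof of Theorem~\ref{thm:no-interleaving} is entirely by mechanisation: the list specification and the theorem are formalised in Isabelle/HOL, the formal statement (\isa{no-interleaving}, phrased via the \isa{insert-seq} and \isa{insert-ops} predicates) is reproduced in Appendix~\ref{appendix:no-interleaving}, and the actual proof scripts live in the Archive of Formal Proofs; the main text offers only the informal intuition around Figure~\ref{fig:op-permutations} that ``the remaining characters follow their initial character.'' You instead give a self-contained mathematical argument whose key lemma~--- the list order produced by the sequential interpretation equals the pre-order traversal of the insertion tree with siblings ordered by descending ID~--- makes that intuition precise. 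Your induction step is sound for exactly the reason you identify: since operations are interpreted in ascending ID order and every reference must be to a strictly smaller ID, a newly interpreted element is always a leaf and always the maximal child of its parent, so splicing it immediately after its parent in $L$ is precisely what the pre-order traversal demands, and the ``resumption point'' after the new one-node subtree is the parent's old successor. What your approach buys is structural insight and a slightly stronger conclusion: contiguity of each user's block, whereas the mechanised statement only concludes the relative order (every $x$ before every $y$, or vice versa); it also generalises gracefully when the OpSet contains further operations, since all $x_i$ lie in the subtree rooted at $x_1$ and all $y_j$ in the disjoint subtree rooted at $y_1$. What the paper's approach buys is machine-checked rigour, reusable proof infrastructure, and explicit handling of the degenerate case where the reference element does not exist (\isa{missing-start-no-insertion}). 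One point you should make explicit if you write this up: for the two blocks to be \emph{adjacent} (not merely uninterleaved), any other child of $\mathit{ref}$ must have a smaller ID than both $x_1$ and $y_1$, which holds in the stated scenario because causally prior operations receive smaller Lamport timestamps, but would fail in the presence of a third concurrent editor at the same position.
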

\begin{proof}
    We formalise the list specification and Theorem~\ref{thm:no-interleaving} using the Isabelle/HOL proof assistant~\cite{DBLP:conf/tphol/WenzelPN08}.
    \ifarxiv
        The formal proof development is summarised in Appendix~\ref{appendix:no-interleaving}.
    \else
        For space reasons, we elide the formal proof development; it is described in the extended version of this paper \cite{ExtendedVersion,AFP}.
    \fi
\end{proof}

For an informal argument why interleaving is ruled out, see Figure~\ref{fig:op-permutations}, which shows an editing scenario similar to Figure~\ref{fig:bad-merge}, but with the insertions of ``~Alice'' and ``~Charlie'' shortened to ``Al'' and ``Ch'' respectively.
The example contains four insertion operations (``A'', ``l'', ``C'', and ``h''), which can be ordered in six possible ways.
However, among the six possible operation orderings there are only two possible results: \texttt{ChAl} or \texttt{AlCh}.
Interleavings such as \texttt{CAhl} or \texttt{AChl} never occur.

In fact, the end result depends only on the relative ordering of the operations that insert ``A'' and ``C'', respectively.
All other operations can be reordered without affecting the outcome.
Thus, even if the inserted strings are longer than two characters, their relative ordering only depends on the IDs of their first character.
The remaining characters follow their initial character without interleaving.

Note that there are only six possible orderings of the four operations, and not $4! = 24$, because the Lamport timestamp ordering on identifiers (as given in \S~\ref{sec:system-model}) is a linear extension of the causal order.
In this example we assume that text is typed from left to right (that is, ``A'' is always inserted before ``l'', and ``C'' is inserted before ``h'').
This implies that the ID of the operation inserting ``l'' must be greater than that of the insertion of ``A'', and likewise the ``h'' insertion must be greater than the ``C'' insertion.

\begin{figure}
\setlength{\tabcolsep}{1pt}
\begin{tabular}{ll|ll|ll}
$\mathit{id}_1, \mathsf{InsertAfter}(\mathit{id}_0), \text{``A''}$ & $\rightarrow$ \texttt{A} &
$\mathit{id}_1, \mathsf{InsertAfter}(\mathit{id}_0), \text{``A''}$ & $\rightarrow$ \texttt{A} &
$\mathit{id}_1, \mathsf{InsertAfter}(\mathit{id}_0), \text{``A''}$ & $\rightarrow$ \texttt{A} \\
$\mathit{id}_2, \mathsf{InsertAfter}(\mathit{id}_1), \text{``l''}$ & $\rightarrow$ \texttt{Al} &
$\mathit{id}_2, \mathsf{InsertAfter}(\mathit{id}_0), \text{``C''}$ & $\rightarrow$ \texttt{CA} &
$\mathit{id}_2, \mathsf{InsertAfter}(\mathit{id}_0), \text{``C''}$ & $\rightarrow$ \texttt{CA} \\
$\mathit{id}_3, \mathsf{InsertAfter}(\mathit{id}_0), \text{``C''}$ & $\rightarrow$ \texttt{CAl} &
$\mathit{id}_3, \mathsf{InsertAfter}(\mathit{id}_1), \text{``l''}$ & $\rightarrow$ \texttt{CAl} &
$\mathit{id}_3, \mathsf{InsertAfter}(\mathit{id}_2), \text{``h''}$ & $\rightarrow$ \texttt{ChA} \\
$\mathit{id}_4, \mathsf{InsertAfter}(\mathit{id}_3), \text{``h''}$ & $\rightarrow$ \texttt{ChAl} &
$\mathit{id}_4, \mathsf{InsertAfter}(\mathit{id}_2), \text{``h''}$ & $\rightarrow$ \texttt{ChAl} &
$\mathit{id}_4, \mathsf{InsertAfter}(\mathit{id}_1), \text{``l''}$ & $\rightarrow$ \texttt{ChAl} \\[6pt] \hline &&&&&\\[-6pt]
$\mathit{id}_1, \mathsf{InsertAfter}(\mathit{id}_0), \text{``C''}$ & $\rightarrow$ \texttt{C} &
$\mathit{id}_1, \mathsf{InsertAfter}(\mathit{id}_0), \text{``C''}$ & $\rightarrow$ \texttt{C} &
$\mathit{id}_1, \mathsf{InsertAfter}(\mathit{id}_0), \text{``C''}$ & $\rightarrow$ \texttt{C} \\
$\mathit{id}_2, \mathsf{InsertAfter}(\mathit{id}_0), \text{``A''}$ & $\rightarrow$ \texttt{AC} &
$\mathit{id}_2, \mathsf{InsertAfter}(\mathit{id}_0), \text{``A''}$ & $\rightarrow$ \texttt{AC} &
$\mathit{id}_2, \mathsf{InsertAfter}(\mathit{id}_1), \text{``h''}$ & $\rightarrow$ \texttt{Ch} \\
$\mathit{id}_3, \mathsf{InsertAfter}(\mathit{id}_2), \text{``l''}$ & $\rightarrow$ \texttt{AlC} &
$\mathit{id}_3, \mathsf{InsertAfter}(\mathit{id}_1), \text{``h''}$ & $\rightarrow$ \texttt{ACh} &
$\mathit{id}_3, \mathsf{InsertAfter}(\mathit{id}_0), \text{``A''}$ & $\rightarrow$ \texttt{ACh} \\
$\mathit{id}_4, \mathsf{InsertAfter}(\mathit{id}_1), \text{``h''}$ & $\rightarrow$ \texttt{AlCh} &
$\mathit{id}_4, \mathsf{InsertAfter}(\mathit{id}_2), \text{``l''}$ & $\rightarrow$ \texttt{AlCh} &
$\mathit{id}_4, \mathsf{InsertAfter}(\mathit{id}_3), \text{``l''}$ & $\rightarrow$ \texttt{AlCh} \\
\end{tabular}
\caption{All possible operation orderings when the strings ``Al'' (for ``Alice'') and ``Ch'' (for ``Charlie'') are concurrently inserted at the same position.
The operation IDs are arbitrary; we only require that $id_0 < id_1 < id_2 < id_3 < id_4$.}\label{fig:op-permutations}
\end{figure}

\begin{theorem}
    The OpSet list specification from \S~\ref{sec:datatypes} is strictly stronger than the $\mathcal{A}_\textsf{strong}$ specification of Attiya et al \cite{Attiya:2016kh}.
    That is, any algorithm that satisfies the list specification given in \S~\ref{sec:datatypes} also satisfies $\mathcal{A}_\textsf{strong}$, but the converse is not true.
\end{theorem}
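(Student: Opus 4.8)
The plan is to prove the two halves of the claim separately: the forward inclusion, that every behaviour permitted by the OpSet list specification is also permitted by $\mathcal{A}_\textsf{strong}$, and the strictness, that the converse inclusion fails. I would dispatch the strictness half first, since it is an immediate corollary of the two preceding theorems. Theorem~\ref{thm:attiya-allows-interleaving} exhibits the interleaved outcome of Figure~\ref{fig:bad-merge} as a behaviour admitted by $\mathcal{A}_\textsf{strong}$ (witnessed concretely by Logoot and LSEQ), whereas Theorem~\ref{thm:no-interleaving} shows that no execution of the OpSet list interpretation can produce such an interleaving of two concurrent insertion runs. Hence there is a document history allowed by $\mathcal{A}_\textsf{strong}$ that the OpSet specification forbids, so the two specifications are not equivalent and the inclusion established below is proper.

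The substantive half is the forward inclusion. Because the OpSet interpretation is itself executable, it suffices to show that every state history obtained by interpreting a set of $\mathsf{MakeList}$, $\mathsf{InsertAfter}$, $\mathsf{Assign}$, and $\mathsf{Remove}$ operations can be realised as a valid $\mathcal{A}_\textsf{strong}$ abstract execution. In essence, $\mathcal{A}_\textsf{strong}$ requires a single global total order $\mathsf{lo}$ on all ever-inserted list elements such that each read returns exactly the currently-present elements (those inserted and not subsequently removed) arranged as a subsequence of $\mathsf{lo}$, with no duplicated or fabricated elements. I would construct $\mathsf{lo}$ directly from the list relation $L$: for two elements $a$ and $b$, set $a <_{\mathsf{lo}} b$ iff $a$ precedes $b$ in the linked-list ordering induced by $L$ in the interpretation $\llbracket O \rrbracket$ of the whole OpSet $O$.

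The key lemma is \emph{order-stability}: interpreting an additional $\mathsf{InsertAfter}$ operation only splices a new element into $L$ locally and never alters the relative order of two elements already present. This makes $<_{\mathsf{lo}}$ well-defined independently of which sub-OpSet containing both $a$ and $b$ is used to compute it, and it guarantees that the per-state list order at any prefix is exactly the restriction of $\mathsf{lo}$ to the elements present in that prefix. This stability property is essentially the fact already established for Theorem~\ref{thm:no-interleaving}, and it is the heart of the argument. Granting it, the remaining $\mathcal{A}_\textsf{strong}$ conditions follow routinely: reads return subsequences of $\mathsf{lo}$ because the user-visible list is $L$ projected onto elements that have a live value in $E$; no fabrication occurs because every element of $L$ originates from an $\mathsf{InsertAfter}$; no duplication occurs because $\mathsf{InsertAfter}$ tags each element with a fresh unique ID; and consistency with visibility and arbitration holds because $L$ is determined by the total order on IDs, which is a linear extension of the causal order.

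The main obstacle is the impedance mismatch between the two formalisms rather than any single hard inequality. Attiya et al.\ phrase $\mathcal{A}_\textsf{strong}$ in the abstract-execution style, with operations equipped with visibility and arbitration relations and explicit return values, whereas the OpSet approach is a concrete interpretation producing the pair $(E, L)$. Defining the translation of OpSet operations into $\mathcal{A}_\textsf{strong}$ insert/delete/read events, fixing the visibility and arbitration relations so that the induced $\mathsf{lo}$ coincides with the order read off $L$, and then discharging each $\mathcal{A}_\textsf{strong}$ axiom against the interpretation's definition is where the real effort lies. This is precisely the kind of bookkeeping that the Isabelle/HOL mechanisation is well suited to keep honest, which is why I would carry out the inclusion direction there rather than on paper.
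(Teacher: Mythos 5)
Your proposal follows essentially the same route as the paper: strictness is derived exactly as the paper does, as a corollary of Theorems~\ref{thm:attiya-allows-interleaving} and~\ref{thm:no-interleaving}, and the forward inclusion is discharged by a mechanised Isabelle/HOL proof whose key lemma is the one you identify as order-stability---the paper formalises precisely this as its condition~1(b) theorem, stating that the list order computed from any sub-OpSet is preserved in every superset OpSet, so the global list order $\textsf{lo}$ can be read off the interpretation of the full operation set. The remaining conditions (1(a), 1(c), and transitivity/irreflexivity/totality of $\textsf{lo}$) are handled in the paper's mechanisation much as you describe, using a simplified single-list interpretation with only insert and delete operations to bridge the formalism gap you correctly flag as the main bookkeeping burden.
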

\begin{proof}
    We formalise the $\mathcal{A}_\textsf{strong}$ specification with Isabelle/HOL, and produce a mechanically verified proof that every possible execution of the list specification from \S~\ref{sec:datatypes} satisfies all conditions of $\mathcal{A}_\textsf{strong}$.
    \ifarxiv
        The formal proof development is summarised in Appendix~\ref{appendix:attiya-spec}.
    \else
        The formal proof development is described in the extended version of this paper \cite{ExtendedVersion,AFP}.
    \fi
    The fact that our specification is \emph{strictly} stronger follows from Theorems~\ref{thm:attiya-allows-interleaving} and~\ref{thm:no-interleaving}.
\end{proof}

\begin{theorem}
    The RGA algorithm \cite{Roh:2011dw} satisfies the OpSet list specification introduced in this paper, while Logoot \cite{Weiss:2009ht,Weiss:2010hx} and LSEQ \cite{Nedelec:2016eo,Nedelec:2013ky} do not.
\end{theorem}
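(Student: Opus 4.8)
The statement has two halves that I would prove by quite different means. The negative half---that Logoot and LSEQ do not satisfy the specification---is immediate from results already in hand. Theorem~\ref{thm:no-interleaving} establishes that any algorithm satisfying the list specification of \S~\ref{sec:datatypes} must never interleave concurrent insertions, while Theorem~\ref{thm:attiya-allows-interleaving} exhibits concrete executions of Logoot and LSEQ that do interleave. An algorithm cannot simultaneously forbid and permit interleaving, so neither Logoot nor LSEQ can satisfy the specification; the execution of Figure~\ref{fig:bad-merge} is itself a witness separating them from the spec.

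The positive half---that RGA satisfies the specification---is the substantial part, and I would approach it as a refinement argument showing that RGA faithfully implements the interpretation function $\llbracket-\rrbracket$ restricted to list operations. First I would fix a precise operational semantics for RGA: each element carries a unique timestamp, an insertion records the element it is placed after, and the visible sequence is obtained by a traversal in which elements sharing the same reference are ordered by descending timestamp. I would then exploit convergence on both sides to reduce to a single convenient delivery order. Since RGA is a CRDT its final state is independent of the order in which operations are applied, and since $\llbracket-\rrbracket$ is by construction a deterministic function of the operation set, it suffices to compare the two when every operation is applied in ascending order of its identifier. This is legitimate because, as noted in \S~\ref{sec:datatypes}, identifier order is a linear extension of the causal order, so every $\mathsf{InsertAfter}(\mathit{ref})$ is processed only after $\mathit{ref}$ itself and the degenerate no-op case of the interpretation never arises.

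The heart of the proof is then an inductive invariant relating an RGA state to the list relation $L$ of \S~\ref{sec:datatypes-interp}: after processing any identifier-ordered prefix of the operations, the sequence produced by the RGA traversal coincides with the sequence obtained by following $L$ from each $\mathsf{MakeList}$ head through $\bot$. The observation that makes the two agree is the tie-breaking rule. Tracing the $\mathsf{InsertAfter}$ interpretation shows that when several elements share a reference, the one with the larger identifier ends up closer to $\mathit{ref}$: processing the larger-$\mathit{id}$ operation later splices $(\mathit{ref}, \mathit{id})$ and $(\mathit{id}, n)$ into $L$, pushing the already-present smaller-$\mathit{id}$ element further along the chain. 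This descending-by-identifier ordering of siblings is exactly RGA's rule, so the inductive step for $\mathsf{InsertAfter}$ reduces to checking that one linked-list splice in $L$ and one RGA insertion cause the same local reordering. The remaining cases are routine: $\mathsf{MakeList}$ yields an empty list on both sides, and $\mathsf{Assign}$ and $\mathsf{Remove}$ (corresponding to RGA's value attachment and tombstoning) touch only $E$, leaving the order untouched and hence preserving the invariant trivially.

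The main obstacle I anticipate is the inductive step itself, specifically reconciling the fact that the RGA traversal is naturally recursive over an insertion tree whereas $L$ is a flat successor relation. One must show that RGA's depth-first, siblings-by-descending-timestamp walk visits elements in precisely the order given by repeatedly following $L$; the delicate point is that a grandchild may carry a larger identifier than a sibling of its parent yet must still appear later, because the sibling ordering compares only direct children of a common reference. Capturing this requires proving that every $\mathsf{InsertAfter}$ splice preserves well-formedness of $L$---a single acyclic chain from each head to $\bot$---and inserts the new element at exactly the position the sibling ordering dictates, while leaving the referenced element's existing subtree intact. Formalising this tree-to-chain correspondence, together with the bookkeeping that matches RGA's reference-based structure to the relation $L$, is where the bulk of the Isabelle/HOL effort lies; once the invariant is established, the theorem follows by instantiating it at the full operation set and invoking RGA's convergence to lift the result from identifier order to arbitrary delivery order.
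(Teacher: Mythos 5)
Your proposal is correct and takes essentially the same approach as the paper: the negative half is exactly the paper's argument (Logoot and LSEQ admit interleavings by Theorem~\ref{thm:attiya-allows-interleaving}, which Theorem~\ref{thm:no-interleaving} shows the specification forbids), and the positive half mirrors the paper's mechanised Isabelle/HOL proof, whose main theorem (\emph{rga-meets-spec} in Appendix~\ref{appendix:rga}) exhibits the ascending-identifier (\emph{spec-ops}) permutation of an RGA execution and proves, by induction using RGA's commutativity/reordering lemmas, that its sequential \emph{insert-spec} interpretation equals the RGA interpretation. The only difference is presentational: the paper states this as the existence of a sorted permutation with equal interpretation and inducts by peeling off the maximal-ID operation---for which RGA's skipping rule is vacuous, which is precisely your tie-breaking observation---rather than as a reduction of both interpretations to identifier order.
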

\begin{proof}
    We use Isabelle/HOL to prove that RGA satisfies our specification, as described in
    \ifarxiv
        Appendix~\ref{appendix:rga}.
    \else
        the extended version of this paper \cite{ExtendedVersion,AFP}.
    \fi
    Our Isabelle/HOL implementation of RGA is based on the formalisation that we developed in previous work \cite{Gomes:2017vo,Gomes:2017gy}.
    The fact that Logoot and LSEQ do not satisfy our specification follows directly from Theorems~\ref{thm:attiya-allows-interleaving} and~\ref{thm:no-interleaving}.
\end{proof}

\section{A Replicated Tree Datatype}\label{sec:tree}

In \S~\ref{sec:datatypes} we gave an OpSet specification of a replicated object graph datatype.
In this model, every map or list object has a unique ID (namely, the ID of the $\mathsf{MakeMap}$ or $\mathsf{MakeList}$ operation that created it), and objects can reference each other using these IDs.

We now build upon this model, showing how to restrict the object graph so that it is always a tree.
A tree is a graph in which every vertex has exactly one parent (except for the root, which has no parent), and in which the parent relation has no cycles.
Tree data structures are useful in many applications: for example, file systems (consisting of directories and files) and XML or JSON documents are trees.
Branch nodes in this tree may be either maps or lists, and leaf nodes are primitive values (wrapped in a $\mathsf{MakeVal}$ operation).

\subsection{The Difficulty of a Move Operation}\label{sec:tree-difficult}

In applications that use tree-structured data, a frequently required operation is to \emph{move} a subtree to a new location within the tree.
For example:
\begin{itemize}
    \item In a filesystem, renaming a directory can be expressed as moving the directory node from the old name to the new name.
        Similarly, a directory may be moved to a new path.
    \item In vector graphics applications, several graphical objects may be grouped together as a logical unit.
        This operation can be expressed by creating a new branch node to represent the group, and then moving the individual objects to be children of that group node.
    \item In a to-do list application, users may use the order of items in the list to denote a priority order, and they may drag and drop items to change their relative order.
        Reordering items is equivalent to moving items to new locations within the list.
\end{itemize}

A move operation can be naively emulated by deleting the subtree from its old location and recreating it at the new location.
However, if two users perform this process concurrently, the resulting tree will contain two copies of the moved subtree, which would be undesirable in all of the application examples given above.
Thus, we require an \emph{atomic move} operation that does not create duplicate objects in case of concurrent moves.

\begin{figure}
\centering
\begin{tikzpicture}
  \tikzstyle{arrow}=[draw,-{Stealth[length=3.5mm]}]
  \node [rectangle,draw] (start) at (4,4) {
      \begin{tikzpicture}
      \node {$\mathsf{root}$} [level distance=9mm] child {node {$A$} child {node {$C$}}} child {node {$B$}};
      \end{tikzpicture}
  };
  \node [rectangle,draw] (left) at (1,2) {
      \begin{tikzpicture}
      \node {$\mathsf{root}$} [level distance=9mm] child {node {$A$} child {node {$B$}} child {node {$C$}}};
      \end{tikzpicture}
  };
  \node [rectangle,draw] (right) at (7,1.6) {
      \begin{tikzpicture}
      \node {$\mathsf{root}$} [level distance=9mm] child {node {$B$} child {node {$A$} child {node {$C$}}}};
      \end{tikzpicture}
  };
  \node [rectangle,draw] (merge) at (4,-0.2) { ? };
  \node at (1.4,4.0) [text width=2.5cm,text centered] {Move $B$ to be a child of $A$};
  \node at (6.6,4.0) [text width=2.5cm,text centered] {Move $A$ to be a child of $B$};
  \draw [arrow] (start.west) -- (left);
  \draw [arrow] (start.east) -- (right);
  \draw [arrow] (left)  -- (merge.north west);
  \draw [arrow] (right) -- (merge.north east);
  \node at (4,0.6) {merge};
  \node [rectangle,draw] at (9.9,4) {
      \begin{tikzpicture}
      \useasboundingbox (-1.3,-1.3) rectangle (1,1.2);
      \node at (-1.15,1.0) {(\emph{a})};
      \node at (0,1) {$\mathsf{root}$};
      \node (a1) {$A$} [level distance=9mm] child {node (b1) {$B$}} child {node {$C$}};
      \draw (b1.south west) .. controls (-2,-2) and (0,1.5) .. (a1.north);
      \end{tikzpicture}
  };
  \node [rectangle,draw] at (9.9,0.85) {
      \begin{tikzpicture}
      \useasboundingbox (-1.15,-3.0) rectangle (1.15,0.2);
      \node at (-1.0,0.0) {(\emph{b})};
      \node {$\mathsf{root}$} [level distance=9mm] child {node (a2) {$A$} child {node {$B$}}} child {node {$B'$} child {node {$A'$} child {node (c2) {$C$}}}};
      \draw (a2) -- (c2);
      \end{tikzpicture}
  };
  \node [rectangle,draw] (start) at (12.5,4) {
      \begin{tikzpicture}
      \useasboundingbox (-1.15,-2.3) rectangle (1.15,0.2);
      \node at (-1.0,0.0) {(\emph{c})};
      \node {$\mathsf{root}$} [level distance=9mm] child {node {$A$} child {node {$B$}} child {node {$C$}}};
      \end{tikzpicture}
  };
  \node [rectangle,draw] (right) at (12.5,0.85) {
      \begin{tikzpicture}
      \useasboundingbox (-1.15,-3.0) rectangle (1.15,0.2);
      \node at (-1.0,0.0) {(\emph{d})};
      \node {$\mathsf{root}$} [level distance=9mm] child {node {$B$} child {node {$A$} child {node {$C$}}}};
      \end{tikzpicture}
  };
\end{tikzpicture}
\caption{Initially, $A$ and $B$ are siblings. $B$ is moved to be a child of $A$, while concurrently
$A$ is moved to be a child of $B$. Boxes (\emph{a}) to (\emph{d}) show possible outcomes of the merge.}\label{fig:concurrent-move}
\end{figure}

A more subtle kind of conflict is illustrated in Figure~\ref{fig:concurrent-move}.
Here, $B$ is moved to be a child of $A$, while concurrently $A$ is moved to be a child of $B$. 
If the CRDT does not take care to detect this situation, it may introduce a cycle in the merged result, as shown in Figure~\ref{fig:concurrent-move}(\emph{a});
this result is no longer a tree.
Handling such conflicting move operations is a challenging problem, and to our knowledge no existing implementation of a tree CRDT has found an adequate solution to this problem.

Several CRDT tree datatypes for XML \cite{Martin:2010ih,Nicolaescu:2015id} and JSON data \cite{Kleppmann:2016ve,Automerge,crjdt} have been developed, but to our knowledge, none of them define a move operation.
Tao et al.~\cite{Tao:2015gd} implemented a CRDT-based replicated filesystem, resolving concurrent moves with an approach illustrated in Figure~\ref{fig:concurrent-move}(\emph{b}): conflicting branch nodes (directories) are duplicated, and leaf nodes (files) may be referenced from multiple branch nodes.
Thus, Tao et al.'s data structure is strictly a DAG, not a tree.

Najafzadeh~\cite{Najafzadeh:2017vk,Najafzadeh:2018bw} also implemented a CRDT-based replicated filesystem, but chose a different approach: move operations must acquire a global lock before they can proceed, which ensures that conflicting concurrent move operations cannot occur in the first place.
This conservative approach rules out move conflicts, but the resulting datatype is not strictly a CRDT, since some operations require strongly consistent synchronisation.

\subsection{Specifying a Tree with Atomic Moves}\label{sec:tree-spec}

We now demonstrate the power of the OpSets approach by using it to define a tree CRDT with an anomaly-free atomic move operation.
Our specification rules out violations of the tree structure such as those in Figure~\ref{fig:concurrent-move}(\emph{a,b}), and concurrent moves do not duplicate tree nodes.
Moreover, our CRDT does not require any locks or global synchronisation.

When the OpSet contains conflicting move operations, our specification chooses one of them as the one that takes effect, and simply ignores the other conflicting operations.
Thus, in the example of Figure~\ref{fig:concurrent-move}, the merged outcome of the two conflicting move operations is either (\emph{c}) or (\emph{d}).
If two users concurrently move the same item to different locations, the move operation with the greater ID determines the item's final location.
However, in non-conflict situations, all concurrent move operations take effect.

We define a tree to be a restricted form of the object graph specified in \S~\ref{sec:datatypes}.
First, we require that there is a designated root object: assume that we have an operation ID $\mathsf{root}$ that is less than all other operation IDs (according to the total order on identifiers, introduced in \S~\ref{sec:system-model}).
Further assume that for any OpSet $O$ specifying a tree, we have either $(\mathsf{root},\, \mathsf{MakeList}) \in O$ or $(\mathsf{root},\, \mathsf{MakeMap}) \in O$, depending on whether the root node is a list or a map.
We define an object $x$ to be the \emph{parent} of an object $y$ if one of the values in $x$ is a reference to $y$.
The \emph{ancestor} relation is the transitive closure of the parent relation, defined using the element relation $E$:
\begin{align*}
    \mathsf{parent}(E,\, i) &=
    \begin{cases}
        \big\{ (\mathit{obj}, \mathit{val}) \mid \exists\,\mathit{id}, \mathit{key}.\;
            (\mathit{id}, \mathit{obj}, \mathit{key}, \mathit{val}) \in E \big\} & \text{if } i=1 \\
        \big\{ (x, z) \mid (x, y) \in \mathsf{parent}(E,\, i-1) \;\wedge\;
            (y, z) \in \mathsf{parent}(E,\, 1) \big\} & \text{if } i > 1
    \end{cases} \\[8pt]
    \mathsf{ancestor}(E) &= \bigcup_{i \;\geq\; 1} \mathsf{parent}(E,\, i)
\end{align*}

An object graph is a tree if the root has no parent, every non-root node has exactly one parent, and if the ancestor relation has no cycles.
We can redefine the operation interpretations from \S~\ref{sec:datatypes-interp} to preserve this tree invariant.
In fact, it is sufficient to redefine only the interpretation of $\mathsf{Assign}$, and to leave the interpretation of the other five operation types unchanged:
\begin{align*}
    \mathsf{interp}&\big[(E,\, L),\; (\mathit{id},\, \mathsf{Assign}(\mathit{obj}, \mathit{key}, \mathit{val}, \mathit{prev})) \big] \;=\\
    & \left\{
    \arraycolsep=0pt \def\arraystretch{1.5}
    \begin{array}{l}
        (E,\, L) \qquad \text{if } (\mathit{val},\, \mathit{obj}) \in \mathsf{ancestor}(E) \\[2pt]
        \Big( \big\{ (\mathit{id}', \mathit{obj}', \mathit{key}', \mathit{val}') \in E \mid
        \mathit{id}' \notin \mathit{prev} \wedge \mathit{val}' \neq \mathit{val} \big\} \;\cup\;
        \big\{ (\mathit{id}, \mathit{obj}, \mathit{key}, \mathit{val}) \big\},\; L \Big) \\
        \hphantom{(E,\, L)} \qquad \text{if } (\mathit{val},\, \mathit{obj}) \notin \mathsf{ancestor}(E)
    \end{array} \right.
\end{align*}

This definition differs in two ways from that in \S~\ref{sec:datatypes-interp}.
Firstly, the operation has no effect if $\mathit{val}$ is already an ancestor of the proposed parent $\mathit{obj}$, since the operation would otherwise introduce a cycle.
Secondly, any existing tuple in $E$ that references the same value $\mathit{val}$ is removed, preserving the invariant that every non-root node must have exactly one parent.

This interpretation of $\mathsf{Assign}$ performs an atomic move whenever $\mathit{val}$ is the ID of an existing object in the tree; in that case, it is moved from its existing position to the key $\mathit{key}$ in the object $\mathit{obj}$.
If $\mathit{val}$ does not currently exist in the tree (e.g.\ because it has just been created), the operation behaves like conventional assignment.

\section{Related Work}\label{sec:relwork}

\subsection{Interpretation of Operation Sequences}\label{sec:op-sequences}

The general idea of establishing a total order of operations, and executing them in that order, appears in many areas of computing:
for example, in the state machine approach to replication \cite{Schneider:1990vy},
the event sourcing approach to data modelling \cite{Vernon:2013ww},
write-ahead logs for crash recovery \cite{Mohan:1992fe},
serializable transactions \cite{Davidson:1985hv},
and scalable multicore data structures \cite{BoydWickizer:2014uz}.
However, beneath the superficial similarity of these approaches there are important differences that need to be distinguished.

As discussed in \S~\ref{sec:order-change}, many of these systems rely on the property that after some operation is executed, all subsequent operations will appear \emph{after} it in the total order.
In other words, the operation sequence is an append-only log, and new operations never need to be inserted ahead of an existing operation in the total order.
This is a very strong property: in the context of a distributed system, it requires an atomic broadcast (or total order broadcast) protocol \cite{Defago:2004ji}, which is equivalent to solving distributed consensus \cite{Chandra:1996cp}.
This class of protocols requires communication with a quorum of nodes in order to make progress \cite{Howard:2016tz}, and it cannot guarantee progress in a fully asynchronous setting \cite{Fischer:1985tt}.

By contrast, the sequential OpSet interpretation of \S~\ref{sec:op-serial} does not require atomic broadcast because it allows operations to be added to the OpSet in any order, and it assigns operation IDs without any coordination.
Few systems use this approach; the most closely related prior work are the Bayou system \cite{Terry:1995dn}, which executes tentative transactions deterministically in timestamp order, and Burckhardt's \emph{standard conflict resolution} \cite[\S~4.3.3]{Burckhardt:2014hy}.
Both of these share the OpSet approach's characteristic that operations with a higher ID need to be undone and re-applied when a new operation with a lower ID is received.

Our contribution in this paper is to formulate the OpSet approach more generally as a tool for specifying and reasoning about complex replicated data structures, such as lists and trees.
Our work is the first to use this approach in mechanised proofs, in which we show that a non-OpSet list CRDT (RGA) satisfies an OpSet-based specification, and prove the absence of the interleaving anomaly in Figure~\ref{fig:bad-merge}.

Baquero et al.~\cite{Baquero:2014ed} and Grishchenko~\cite{Grishchenko:2014eh} have proposed representing CRDTs in terms of a partially-ordered log of operations, where the partial order captures the causal relationships between operations.
The OpSet approach can be seen as a variant of this idea, in which we define the total order on identifiers to be a linear extension of the partial order.

\subsection{Specification and Verification of Replicated Datatypes}

Algorithms for collaboratively editing a shared data structure have been the topic of active research for approximately 30 years, under the headings of Operational Transformation \cite{Ellis:1989ue,Ressel:1996wx,Sun:1998vf,Oster:2006tr} and CRDTs \cite{Shapiro:2011wy,Shapiro:2011un}.
However, throughout this time, the exact consistency properties provided by the algorithms have been somewhat unclear.
For example, Sun et al.~\cite{Sun:1998un} identified three desirable properties that they articulated informally: \emph{convergence}, \emph{causality preservation}, and \emph{intention preservation}.
While the definition of the first two properties is fairly unambiguous, the definition of ``intention preservation'' leaves much more room for interpretation.
Efforts to formally specify and verify the semantics of replicated datatypes have replaced such informal statements with precise consistency properties.

Burckhardt et al.~\cite{Burckhardt:2014ft} provide a wide-ranging formal account of CRDTs, covering their specification, verification, and optimality, with the semantics of an operation on a replicated datatype given as a function of the operation, $o$, and a \emph{operation context}---the set of operations visible to a node at the time that $o$ was received.
Our OpSets can be seen as an explicitly executable variation on this idea: nodes record all operations that they have ever received in a monotonically growing set, and the interpretation function builds the result ``bottom up'' in a fold-like operation.
In contrast to Burckhardt et al., who focus on applying their techniques to set and counter datatypes, we apply our approach to the specification of lists, maps, and trees, using our OpSets as a tool for designing new replicated datatypes---including those previously thought impossible, such as our replicated tree with atomic move.
Gotsman et al.~\cite{DBLP:conf/popl/GotsmanYFNS16} extend Burckhardt et al.'s formalism to reason about hybrid consistency models, providing a modular proof rule inspired by permissions-based logics to enforce an integrity invariant for a given consistency model.

Bieniusa et al.~\cite{Bieniusa:2012gt} articulate a \emph{principle of permutation equivalence} that partially specifies the expected semantics of replicated datatypes, but which leaves some combinations of operations unspecified.
Zeller et al.~\cite{Zeller:2014fl} formalise counters, registers, and sets using Isabelle/HOL and provide mechanised proofs of their correctness.
Attiya et al.~\cite{Attiya:2016kh} give two specifications of collaborative text editing ($\mathcal{A}_\textsf{strong}$ and $\mathcal{A}_\textsf{weak}$), prove that the RGA CRDT \cite{Roh:2011dw} satisfies $\mathcal{A}_\textsf{strong}$, and conjecture that the Operational Transformation algorithm Jupiter \cite{Nichols:1995fd} satisfies $\mathcal{A}_\textsf{weak}$.
Wei et al.~\cite{Wei:2017tg} complete the proof that Jupiter satisfies $\mathcal{A}_\textsf{weak}$.

In our prior work \cite{Gomes:2017gy} we establish a formal verification framework for CRDTs in Isabelle/HOL, and verify the strong eventual consistency properties (in particular, convergence) of a list, set, and counter datatype.
The Isabelle implementation of RGA we use in \S~\ref{sec:datatypes} is based on this work \cite{Gomes:2017vo}.
However, this work does not specify the datatype semantics beyond the convergence property.

Gaducci et al.~\cite{DBLP:conf/coordination/GadducciMR17} develop a semantics for replicated datatypes, placing a focus on compositionality, where a replicated datatype is modelled as a function from labelled directed acyclic graphs of events to sets of values, with each value in this set potentially observable at a node under different ordering of events observed at that node.
A notion of behavioural \emph{refinement} for replicated datatypes induced by set inclusion is also defined, along with a generalisation of their relational semantics to a categorical one.

Mukund et al.~\cite{DBLP:conf/atva/MukundRS15} use traces to provide bounded declarative specifications of CRDTs and show how Counter Example Guided Abstract Refinement (CEGAR) can be used to automatically verify a reference CRDT implementation against its bounded specification.

\subsection{Collaborative Tree Datatypes}

For collaborative editing of tree data structures, several CRDTs \cite{Martin:2010ih,Kleppmann:2016ve} and Operational Transformation algorithms \cite{Jungnickel:2016cb,Ignat:2003jy,Davis:2002iv} have been proposed.
However, most of them only consider insertion and deletion of tree nodes, but do not support a move operation.

As explained in \S~\ref{sec:tree}, supporting an operation that can move a subtree to a new location within a tree introduces new conflicts that need to be handled.
Ahmed-Nacer et al.~\cite{AhmedNacer:2012us} survey approaches to handling these conflicts without providing concrete algorithms.
Tao et al.~\cite{Tao:2015gd} propose handling conflicting move operations by allowing the same object to appear in more than one location; thus, their datatype is strictly a DAG, not a tree.

Najafzadeh~\cite{Najafzadeh:2017vk,Najafzadeh:2018bw} asserts that concurrent move operations on a tree cannot safely be implemented in a CRDT, since the precondition of a move operation is not stable.
Najafzadeh suggests the use of locks to globally synchronise move operations, preventing a scenario such as that in Figure~\ref{fig:concurrent-move} from ever occurring.
However, the resulting datatype is not strictly a CRDT, since some operations require strongly consistent synchronisation.

To our knowledge, our move semantics specified in \S~\ref{sec:tree} is the first definition of such an operation on a fully asynchronous tree CRDT.
We avoid the apparent contradiction with Najafzadeh's assertion by evaluating the precondition $(\mathit{val},\, \mathit{obj}) \notin \mathsf{ancestor}(E)$ at the same time as applying the operation, rather than at the time when the operation is generated, and by applying all operations in the OpSet in a deterministic order.

\section{Conclusion}

In this work we have introduced Operation Sets (OpSets), a simple but powerful approach for specifying the semantics of replicated datatypes.
We specified a variety of common, composable replicated datatypes in the OpSets model, and used Isabelle/HOL to formally reason about their properties.
We have used this specification to highlight an interleaving anomaly that affects some existing collaborative text editing algorithms, and proved that the RGA algorithm satisfies our list specification.
Finally, we demonstrated how the OpSet model to can be used to develop new replication algorithms, and we introduced a specification for an atomic move operation in a tree CRDT.

The OpSets approach is an executable specification that precisely defines the permitted states of a replica after some set of updates have been applied.
In this paper we have used a sequential OpSet interpretation: operations are applied in strict ascending order of ID.
This property is very useful as it trivially ensures convergence, and it simplifies reasoning about specifications and invariants for CRDTs.
In contrast, the traditional approach to defining CRDTs requires operations to be commutative, increasing their complexity.
In proving that RGA satisfies our list specification, we demonstrate a correspondence between sequential specification and commutative implementation; for future work it will be interesting to further explore this correspondence for other datatypes.
In particular, we hypothesise that it is possible to derive a tree CRDT with a commutative move operation from the specification in \S~\ref{sec:tree}, which could then be used to implement a distributed peer-to-peer file system.

Although we focussed on sequential OpSet interpretations in this paper, note that any deterministic function can be used as interpretation function $\llbracket-\rrbracket$.
In particular, one can view the OpSet as a \emph{database of facts}, containing all changes ever made to the shared data, and the interpretation function as a \emph{query} over this database.
The resulting datatype is then a \emph{materialized view} in database terminology.
When new operations are added to an OpSet $O$, computing the corresponding change to $\llbracket O \rrbracket$ is a materialized view maintenance problem, for which optimised algorithms have been developed \cite{Gupta:1999uz}.
We hypothesise that these techniques can be applied to replicated datatypes, allowing efficient CRDT implementations to be derived from an OpSet-based specification.

\subsection*{Acknowledgements}

The authors wish to acknowledge the support of The Boeing Company,
the EPSRC ``REMS: Rigorous Engineering for Mainstream Systems'' programme grant (EP/K008528), and
the EPSRC ``Interdisciplinary Centre for Finding, Understanding and Countering Crime in the Cloud'' grant (EP/M020320).
We thank Nathan Chong, Peter Sewell, and KC Sivaramakrishnan for their helpful feedback on this paper.

\bibliographystyle{plainurl}
\bibliography{references}{}

\ifarxiv
\newpage
\appendix
\section{Generating Operations}\label{sect:appendix:generating-ops}

\floatname{algorithm}{Listing}
\begin{algorithm}
\caption{Generating new operations for modifying maps and lists.}\label{fig:pseudocode}
\noindent
\renewcommand\algorithmicindent{10pt}
\begin{minipage}[t]{0.5\textwidth}
\begin{algorithmic}[0]
    \Function{setMapKey}{$O, \mathit{map}, \mathit{key}, \mathit{val}$}
    \State $(E,\, L) = \llbracket O \rrbracket$
    \State $(\mathit{id}_1, \mathit{op}_1) = \Call{valueID}{O, \mathit{val}}$
    \If{$\mathit{op}_1 \neq \bot$}
    \State $O := O \;\cup\; \big\{ (\mathit{id}_1, \mathit{op}_1) \big\})$
    \EndIf
    \State $\mathit{id}_2 = \mathrm{newID}(O)$
    \State $\mathit{prev} = \{ \mathit{id} \mid \exists\,v.\; (\mathit{id}, \mathit{map}, \mathit{key}, v) \in E \}$
    \State \Return $O \;\cup$
    \State $\quad\big\{ (\mathit{id}_2,\, \mathsf{Assign}(\mathit{map}, \mathit{key}, \mathit{id}_1, \mathit{prev})) \big\}$
    \EndFunction\Statex

    \Function{removeMapKey}{$O, \mathit{map}, \mathit{key}$}
    \State $(E,\, L) = \llbracket O \rrbracket$
    \State $\mathit{id}_1 = \mathrm{newID}(O)$
    \State $\mathit{prev} = \{ \mathit{id} \mid \exists\,v.\; (\mathit{id}, \mathit{map}, \mathit{key}, v) \in E \}$
    \State \Return $O \;\cup$
    \State $\quad\big\{ (\mathit{id}_1,\, \mathsf{Remove}(\mathit{map}, \mathit{key}, \mathit{prev})) \big\}$
    \EndFunction\Statex

    \Function{valueID}{$O, \mathit{val}$}
    \If{$\mathit{val}$ is a primitive type}
    \State \Return $(\mathrm{newID}(O),\, \mathsf{MakeVal}(\mathit{val}))$
    \ElsIf{$\mathit{val} = \texttt{[]}$ (empty list literal)}
    \State \Return $(\mathrm{newID}(O),\, \mathsf{MakeList})$
    \ElsIf{$\mathit{val} = \texttt{{\char '173}{\char '175}}$ (empty map literal)}
    \State \Return $(\mathrm{newID}(O),\, \mathsf{MakeMap})$
    \Else ~($\mathit{val}$ is an existing object)
    \State \Return $(\mathrm{objID}(\mathit{val}),\, \bot)$
    \EndIf
    \EndFunction
\end{algorithmic}
\end{minipage}%
\begin{minipage}[t]{0.5\textwidth}
\begin{algorithmic}[0]
    \Function{setListIndex}{$O, \mathit{list}, \mathit{index}, \mathit{val}$}
    \State $(E,\, L) = \llbracket O \rrbracket$
    \State $(\mathit{id}_1, \mathit{op}_1) = \Call{valueID}{O, \mathit{val}}$
    \If{$\mathit{op}_1 \neq \bot$}
    \State $O := O \;\cup\; \big\{ (\mathit{id}_1, \mathit{op}_1) \big\})$
    \EndIf
    \State $\mathit{id}_2 = \mathrm{newID}(O)$
    \State $\mathit{key} = \mathrm{idxKey}_{E,\, L}(\mathit{list}, \mathit{list}, \mathit{index})$
    \State $\mathit{prev} = \{ \mathit{id} \mid \exists\,v.\; (\mathit{id}, \mathit{list}, \mathit{key}, v) \in E \}$
    \State \Return $O \;\cup$
    \State $\quad\big\{ (\mathit{id}_2,\, \mathsf{Assign}(\mathit{list}, \mathit{key}, \mathit{id}_1, \mathit{prev})) \big\}$
    \EndFunction\Statex

    \Function{insListIndex}{$O, \mathit{list}, \mathit{index}, \mathit{val}$}
    \State $(E,\, L) = \llbracket O \rrbracket$
    \State $(\mathit{id}_1, \mathit{op}_1) = \Call{valueID}{O, \mathit{val}}$
    \If{$\mathit{op}_1 \neq \bot$}
    \State $O := O \;\cup\; \big\{ (\mathit{id}_1, \mathit{op}_1) \big\})$
    \EndIf
    \State $\mathit{id}_2 = \mathrm{newID}(O)$
    \If{$\mathit{index}=0$}
    \State $\mathit{ref} = \mathit{list}$
    \Else
    \State $\mathit{ref} = \mathrm{idxKey}_{E,\, L}(\mathit{list}, \mathit{list}, \mathit{index} - 1)$
    \EndIf
    \State $O := O \;\cup\; \big\{ (\mathit{id}_2,\, \mathsf{InsertAfter}(\mathit{ref})) \big\}$
    \State $\mathit{id}_3 = \mathrm{newID}(O)$
    \State \Return $O \;\cup$
    \State $\quad\big\{ (\mathit{id}_3,\, \mathsf{Assign}(\mathit{list}, \mathit{id}_2, \mathit{id}_1, \emptyset)) \big\}$
    \EndFunction\Statex

    \Function{removeListIndex}{$O, \mathit{list}, \mathit{index}$}
    \State $(E,\, L) = \llbracket O \rrbracket$
    \State $\mathit{id}_1 = \mathrm{newID}(O)$
    \State $\mathit{key} = \mathrm{idxKey}_{E,\, L}(\mathit{list}, \mathit{list}, \mathit{index})$
    \State $\mathit{prev} = \{ \mathit{id} \mid \exists\,v.\; (\mathit{id}, \mathit{list}, \mathit{key}, v) \in E \}$
    \State \Return $O \;\cup$
    \State $\quad\big\{ (\mathit{id}_1,\, \mathsf{Remove}(\mathit{list}, \mathit{key}, \mathit{prev})) \big\}$
    \EndFunction
\end{algorithmic}
\end{minipage}
\end{algorithm}

\noindent
Listing~\ref{fig:pseudocode} gives pseudocode for functions that generate these operations.
Intuitively, these functions form an API through which nodes can modify OpSets to indirectly describe replicated list and map datatypes.
The first parameter $O$ of each function is the OpSet that defines the current state of the node, and the functions return an updated OpSet containing new operations.
The interpretation $\llbracket O \rrbracket$ returns a pair $(E,\, L)$ as defined in \S~\ref{sec:datatypes-interp}.
The function $\mathrm{newID}(O)$ returns a unique ID (e.g.\ a Lamport timestamp \cite{Lamport:1978jq}) that is greater than any existing ID in the OpSet $O$.
Note that we elide explicit network broadcasts reflecting changes in a node's OpSet.

The function \textsc{setMapKey} can be called by a user to update a map object with ID $\mathit{map}$, setting a key $\mathit{key}$ to a value $\mathit{val}$.
If $\mathit{val}$ references an existing object, $\mathit{id}_1$ is set to the ID of that existing object; otherwise, the function \textsc{valueID} generates a new $\mathsf{Make}\cdots$ operation for the value, and the new operation is added to $O'$.
A new ID $\mathit{id}_2$ is generated for the $\mathsf{Assign}$ operation, in which the key $\mathit{key}$ is set to $\mathit{id}_1$ (which identifies the value $\mathit{val}$).
Finally, the function returns the OpSet with the $\mathsf{Assign}$ operation included.
The other definitions follow a similar pattern.

The list manipulation functions \textsc{setListIndex}, \textsc{insListIndex} and \textsc{removeListIndex} take a numeric index as argument to identify the position in the list being edited.
The numeric index is translated into the ID of a list element using the function $\mathrm{idxKey}_{E,\, L}()$:
\[ \mathrm{idxKey}_{E,\, L}(\mathit{obj}, \mathit{key}, i) \;=\; \left\{
   \arraycolsep=2pt \def\arraystretch{1.3}
   \begin{array}{llllll}
       \mathrm{idxKey}_{E,\, L}(\mathit{obj}, n, i-1) \\
       \quad\text{if }\; i > 0 \wedge (\mathit{key}, n) \in L & \wedge &
       \exists\,\mathit{id}, \mathit{val}.\; (\mathit{id}, \mathit{obj}, \mathit{key}, \mathit{val}) \in E \\
       \mathrm{idxKey}_{E,\, L}(\mathit{obj}, n, i) \\
       \quad \hphantom{i > 0 \;\wedge\;} \text{if }\; (\mathit{key}, n) \in L & \wedge &
       \nexists\,\mathit{id}, \mathit{val}.\; (\mathit{id}, \mathit{obj}, \mathit{key}, \mathit{val}) \in E \\
       \mathit{key} \hspace{29.5pt}\text{if }\; i = 0 & \wedge &
       \exists\,\mathit{id}, \mathit{val}.\; (\mathit{id}, \mathit{obj}, \mathit{key}, \mathit{val}) \in E \\
   \end{array} \right. \]
$\mathit{key}$ is initially the ID of the $\mathsf{MakeList}$ operation that created the list.
The function recursively moves along the linked list structure in $L$, decrementing the index for every list element that has an associated value, and not counting any list elements without associated value (which are treated as deleted).
Eventually, it returns the ID of the list element with the desired index.

\section{Introduction to Isabelle/HOL}
\label{sect:appendix:isabelle}

To help any readers who are not familiar with Isabelle/HOL, this appendix provides a brief introduction to the key concepts and syntax, taken from our previous work \cite{Gomes:2017gy}.
A more detailed introduction can be found in the standard tutorial material~\cite{DBLP:books/sp/NipkowK14}.

\subsection{Syntax of expressions.}

Isabelle/HOL is a logic with a strict, polymorphic, inferred type system.
\emph{Function types} are written $\tau_1 \Rightarrow \tau_2$, and are inhabited by \emph{total} functions, mapping elements of $\tau_1$ to elements of $\tau_2$.
We write $\tau_1 \times \tau_2$ for the \emph{product type} of $\tau_1$ and $\tau_2$, inhabited by pairs of elements of type $\tau_1$ and $\tau_2$, respectively.
\emph{Type operators} are applied to arguments in reverse order: $\tau\ \isa{list}$ denotes the type of lists of elements of type $\tau$, and $\tau\ \isa{set}$ denotes the type of mathematical (i.e., potentially infinite) sets of type $\tau$, for instance.
Type variables are written in lowercase, and preceded with a prime: ${\isacharprime}a \Rightarrow {\isacharprime}a$ denotes the type of a polymorphic identity function, for example.
\emph{Tagged union} types are introduced with the $\isacommand{datatype}$ keyword, with \emph{constructors} of these types usually written with an initial upper case letter.

In Isabelle/HOL's term language we write $\isa{t} \mathbin{::} \tau$ for a \emph{type ascription}, constraining the type of the term $\isa{t}$ to the type $\tau$.
We write $\lambda{x}.\: t$ for an anonymous function mapping an argument $\isa{x}$ to $\isa{t(x)}$, and write the application of term $\isa{t}$ with function type to an argument $\isa{u}$ as $\isa{t\ u}$.
Terms of list type are introduced using one of two constructors: the empty list $[\,]$ or `nil', and the infix `cons' operator $\isa{\#}$, which prepends an element to an existing list.
We use $[t_1, \ldots, t_n]$ as syntactic sugar for a list literal, and $\isa{xs} \mathbin{\isacharat} \isa{ys}$ to express the concatenation (appending) of two lists $\isa{xs}$ and $\isa{ys}$.
We write $\{\,\}$ for the empty set, and use usual mathematical notation for set union, disjunction, membership tests, and so on: $\isa{t} \cup \isa{u}$, $\isa{t} \cap \isa{u}$, and $\isa{x} \in \isa{t}$.
We write $t \longrightarrow s$ for logical implication between formulae (terms of type $\isa{bool}$).
Strictly speaking Isabelle is a logical framework, providing a weak meta-logic within which object logics are embedded, including the Isabelle/HOL object logic that we use in this work.
Accordingly, the implication arrow of Isabelle's meta-logic, $\isa{t} \Longrightarrow \isa{u}$, is required in certain contexts over the object-logic implication arrow, $t \longrightarrow s$, already introduced.
However, for purposes of an intuitive understanding, the two forms of implication can be regarded as equivalent by the reader, with the requirement to use one over the other merely being an implementation detail of Isabelle itself.
We will sometimes use the shorthand ${\isasymlbrakk}\isa{H}_1{\isacharsemicolon}\ \ldots{\isacharsemicolon}\ \isa{H}_n{\isasymrbrakk}\ {\isasymLongrightarrow}\ C$ instead of iterated meta-logic implications, i.e., $H_1\ {\isasymLongrightarrow}\ \ldots\ {\isasymLongrightarrow}\ H_n\ {\isasymLongrightarrow}\ C$.

\subsection{Definitions and theorems.}

New non-recursive definitions are entered into Isabelle's global context using the $\mathbf{definition}$ keyword.
Recursive functions are defined using the $\mathbf{fun}$ keyword, and support \emph{pattern matching} on their arguments.
All functions are total, and therefore every recursive function must be provably terminating.
All termination proofs in this work are generated automatically by Isabelle itself.

\emph{Inductive relations} are defined with the $\mathbf{inductive}$ keyword.
For example, the definition
\begin{isabelle}
\isacommand{inductive} only-fives\ {\isacharcolon}{\isacharcolon}\ {\isachardoublequoteopen}nat\ list\ {\isasymRightarrow}\ bool{\isachardoublequoteclose}\ \isakeyword{where} \\
~~~~{\isachardoublequoteopen}only-fives\ {\isacharbrackleft}{\isacharbrackright}{\isachardoublequoteclose}\ {\isacharbar}\\
~~~~{\isachardoublequoteopen}{\isasymlbrakk}\ only-fives\ xs\ {\isasymrbrakk}\ {\isasymLongrightarrow}\ only-fives {\isacharparenleft}5\#xs{\isacharparenright}{\isachardoublequoteclose}
\end{isabelle}
\noindent 
introduces a new constant $\isa{only-fives}$ of type $\isa{nat list} \Rightarrow \isa{bool}$.
The two clauses in the body of the definition enumerate the conditions under which $\isa{only-fives}\ \isa{xs}$ is true, for arbitrary $\isa{xs}$: firstly, $\isa{only-fives}$ is true for the empty list; and secondly, if you know that $\isa{only-fives}\ \isa{xs}$ is true for some $\isa{xs}$, then you can deduce that $\isa{only-fives}\ (5\#\isa{xs})$ (i.e., $\isa{xs}$ prefixed with the number 5) is also true.
Moreover, $\isa{only-fives}\ \isa{xs}$ is true in no other circumstances---it is the \emph{smallest} relation closed under the rules defining it.
In short, the clauses above state that $\isa{only-fives}\ \isa{xs}$ holds exactly in the case where $\isa{xs}$ is a (potentially empty) list containing only repeated copies of the natural number $5$.

Lemmas, theorems, and corollaries can be asserted using the $\isacommand{lemma}$, $\isacommand{theorem}$, and $\isacommand{corollary}$ keywords, respectively.
There is no semantic difference between these keywords in Isabelle, and they serve only to mark certain results as especially important (or unimportant) for human readers.
For example,
\begin{isabelle}
\isacommand{theorem} only-fives-concat{\isacharcolon} \\
~~~~\isakeyword{assumes}\ only-fives\ xs \isakeyword{and}\ only-fives\ ys\\
~~~~\isakeyword{shows}\ only-fives\ (xs \isacharat ys)
\end{isabelle}
\noindent 
conjectures that if $\isa{xs}$ and $\isa{ys}$ are both lists of fives, then their concatenation $xs \mathbin{\isacharat} ys$ is also a list of fives.
Isabelle then requires that this claim be proved by using one of its proof methods, for example by induction.
Some proofs can be automated, whilst others require the user to provide explicit reasoning steps.
The theorem is assigned a name, here $\isa{only-fives-concat}$, so that it may be referenced in later proofs.

\section{Statements of Mechanised Proofs}
\label{sect:appendix:statements}

In this appendix we provide a copy of the Isabelle/HOL definitions and proof statements that support the central claims in the paper.
For space reasons, the actual proofs are omitted; the full formal proof development can be found in the Isabelle Archive of Formal Proofs \cite{AFP}.
The source code is available at \url{https://github.com/trvedata/opsets}.

\subsection{Abstract OpSet}\label{sec:abstract-opset}

In this section, we define a general-purpose OpSet abstraction that is not specific to any one particular datatype.
An OpSet is a set of (ID, operation) pairs with an associated total order on IDs (represented here with the \isa{linorder} typeclass), and satisfying the following properties:
\begin{enumerate}
\item The ID is unique (that is, if any two pairs in the set have the same ID, then their operation is also the same).
\item If the operation references the IDs of any other operations, those referenced IDs are less than that of the operation itself, according to the total order on IDs.
To avoid assuming anything about the structure of operations here, we use a function \isa{deps} that returns the set of dependent IDs for a given operation.
This requirement is a weak expression of causality: an operation can only depend on causally prior operations, and by making the total order on IDs a linear extension of the causal order, we can easily ensure that any referenced IDs are less than that of the operation itself.
\item The OpSet is finite (but we do not assume any particular maximum size).
\end{enumerate}
We define it as follows in Isabelle:\footnote{In programming terms, a \emph{locale} (or 'local theory') may be thought of as an interface with associated laws that implementations must obey.
When showing that an implementation matches this interface, one must also show that the implementation satisfies all assumed laws of the locale.
Moreover, locales can be extended with new assumed facts and fixed constants to form a hierarchy, and definitions and theorems may be defined and declared within a locale and made available to all of its implementations.
See the standard Isabelle tutorial material, as well as~\cite{DBLP:conf/tphol/KammullerWP99} and~\cite{DBLP:conf/types/HaftmannW08} for a more detailed explanation of locales.}
\begin{isabelle}
\isacommand{locale}\isamarkupfalse%
\ opset\ {\isacharequal}\isanewline
\ \ \isakeyword{fixes}\ opset\ {\isacharcolon}{\isacharcolon}\ {\isachardoublequoteopen}{\isacharparenleft}{\isacharprime}oid{\isacharcolon}{\isacharcolon}{\isacharbraceleft}linorder{\isacharbraceright}\ {\isasymtimes}\ {\isacharprime}oper{\isacharparenright}\ set{\isachardoublequoteclose}\isanewline
\ \ \ \ \isakeyword{and}\ deps\ \ {\isacharcolon}{\isacharcolon}\ {\isachardoublequoteopen}{\isacharprime}oper\ {\isasymRightarrow}\ {\isacharprime}oid\ set{\isachardoublequoteclose}\isanewline
\ \ \isakeyword{assumes}\ unique{\isacharunderscore}oid{\isacharcolon}\ {\isachardoublequoteopen}{\isacharparenleft}oid{\isacharcomma}\ op{\isadigit{1}}{\isacharparenright}\ {\isasymin}\ opset\ {\isasymLongrightarrow}\ {\isacharparenleft}oid{\isacharcomma}\ op{\isadigit{2}}{\isacharparenright}\ {\isasymin}\ opset\ {\isasymLongrightarrow}\ op{\isadigit{1}}\ {\isacharequal}\ op{\isadigit{2}}{\isachardoublequoteclose}\isanewline
\ \ \ \ \isakeyword{and}\ ref{\isacharunderscore}older{\isacharcolon}\ {\isachardoublequoteopen}{\isacharparenleft}oid{\isacharcomma}\ oper{\isacharparenright}\ {\isasymin}\ opset\ {\isasymLongrightarrow}\ ref\ {\isasymin}\ deps\ oper\ {\isasymLongrightarrow}\ ref\ {\isacharless}\ oid{\isachardoublequoteclose}\isanewline
\ \ \ \ \isakeyword{and}\ finite{\isacharunderscore}opset{\isacharcolon}\ {\isachardoublequoteopen}finite\ opset{\isachardoublequoteclose}%
\end{isabelle}

We prove that any subset of an OpSet is also a valid OpSet.
This is the case because, although an operation can depend on causally prior operations, the OpSet does not require those prior operations to actually exist.
This weak assumption makes the OpSet model more general and simplifies reasoning about OpSets.
\begin{isabelle}
\isacommand{lemma}\isamarkupfalse%
\ opset{\isacharunderscore}subset{\isacharcolon}\isanewline
\ \ \isakeyword{assumes}\ {\isachardoublequoteopen}opset\ Y\ deps{\isachardoublequoteclose}\isanewline
\ \ \ \ \isakeyword{and}\ {\isachardoublequoteopen}X\ {\isasymsubseteq}\ Y{\isachardoublequoteclose}\isanewline
\ \ \isakeyword{shows}\ {\isachardoublequoteopen}opset\ X\ deps{\isachardoublequoteclose}
\end{isabelle}

\subsubsection{The \isa{spec-ops} predicate}

The \isa{spec-ops} predicate describes a list of (ID, operation) pairs that corresponds to the linearisation of an OpSet, and which we use for sequentially interpreting the OpSet.
A list satisfies \isa{spec-ops} iff it is sorted in ascending order of IDs, if the IDs are unique, and if every operation's dependencies have lower IDs than the operation itself.
A list is implicitly finite in Isabelle/HOL.

\begin{isabelle}
\isacommand{definition}\isamarkupfalse%
\ spec{\isacharunderscore}ops\ {\isacharcolon}{\isacharcolon}\ {\isachardoublequoteopen}{\isacharparenleft}{\isacharprime}oid{\isacharcolon}{\isacharcolon}{\isacharbraceleft}linorder{\isacharbraceright}\ {\isasymtimes}\ {\isacharprime}oper{\isacharparenright}\ list\ {\isasymRightarrow}\ {\isacharparenleft}{\isacharprime}oper\ {\isasymRightarrow}\ {\isacharprime}oid\ set{\isacharparenright}\ {\isasymRightarrow}\ bool{\isachardoublequoteclose}\isanewline
\isakeyword{where}\isanewline
\ \ {\isachardoublequoteopen}spec{\isacharunderscore}ops\ ops\ deps\ {\isasymequiv}\ {\isacharparenleft}sorted\ {\isacharparenleft}map\ fst\ ops{\isacharparenright}\ {\isasymand}\ distinct\ {\isacharparenleft}map\ fst\ ops{\isacharparenright}\ {\isasymand}\isanewline
\ \ \ \ \ \ \ \ \ \ \ {\isacharparenleft}{\isasymforall}oid\ oper\ ref{\isachardot}\ {\isacharparenleft}oid{\isacharcomma}\ oper{\isacharparenright}\ {\isasymin}\ set\ ops\ {\isasymand}\ ref\ {\isasymin}\ deps\ oper\ {\isasymlongrightarrow}\ ref\ {\isacharless}\ oid{\isacharparenright}{\isacharparenright}{\isachardoublequoteclose}
\end{isabelle}
\noindent We prove that for any given OpSet, a \isa{spec-ops} linearisation exists:
\begin{isabelle}
\isacommand{lemma}\isamarkupfalse%
\ spec{\isacharunderscore}ops{\isacharunderscore}exists{\isacharcolon}\isanewline
\ \ \isakeyword{assumes}\ {\isachardoublequoteopen}opset\ ops\ deps{\isachardoublequoteclose}\isanewline
\ \ \isakeyword{shows}\ {\isachardoublequoteopen}{\isasymexists}op{\isacharunderscore}list{\isachardot}\ set\ op{\isacharunderscore}list\ {\isacharequal}\ ops\ {\isasymand}\ spec{\isacharunderscore}ops\ op{\isacharunderscore}list\ deps{\isachardoublequoteclose}
\end{isabelle}
\noindent Conversely, for any given \isa{spec-ops} list, the set of pairs in the list is an OpSet:
\begin{isabelle}
\isacommand{lemma}\isamarkupfalse%
\ spec{\isacharunderscore}ops{\isacharunderscore}is{\isacharunderscore}opset{\isacharcolon}\isanewline
\ \ \isakeyword{assumes}\ {\isachardoublequoteopen}spec{\isacharunderscore}ops\ op{\isacharunderscore}list\ deps{\isachardoublequoteclose}\isanewline
\ \ \isakeyword{shows}\ {\isachardoublequoteopen}opset\ {\isacharparenleft}set\ op{\isacharunderscore}list{\isacharparenright}\ deps{\isachardoublequoteclose}
\end{isabelle}

\subsubsection{The \isa{crdt-ops} predicate}

Like \isa{spec-ops}, the \isa{crdt-ops} predicate describes the linearisation of an OpSet into a list.
Like \isa{spec-ops}, it requires IDs to be unique.
However, its other properties are different: \isa{crdt-ops} does not require operations to appear in sorted order, but instead, whenever any operation references the ID of a prior operation, that prior operation must appear previously in the \isa{crdt-ops} list.
Thus, the order of operations is partially constrained: operations must appear in causal order, but concurrent operations can be ordered arbitrarily.

This list describes the operation sequence in the order it is typically applied to an operation-based CRDT.
Applying operations in the order they appear in \isa{crdt-ops} requires that concurrent operations commute.
For any \isa{crdt-ops} operation sequence, there is a permutation that satisfies the \isa{spec-ops} predicate.
Thus, to check whether a CRDT satisfies its sequential specification, we can prove that interpreting any \isa{crdt-ops} operation sequence with the commutative operation interpretation results in the same end result as interpreting the \isa{spec-ops} permutation of that operation sequence with the sequential operation interpretation.

\begin{isabelle}
\isacommand{inductive}\isamarkupfalse%
\ crdt{\isacharunderscore}ops\ {\isacharcolon}{\isacharcolon}\ {\isachardoublequoteopen}{\isacharparenleft}{\isacharprime}oid{\isacharcolon}{\isacharcolon}{\isacharbraceleft}linorder{\isacharbraceright}\ {\isasymtimes}\ {\isacharprime}oper{\isacharparenright}\ list\ {\isasymRightarrow}\ {\isacharparenleft}{\isacharprime}oper\ {\isasymRightarrow}\ {\isacharprime}oid\ set{\isacharparenright}\ {\isasymRightarrow}\ bool{\isachardoublequoteclose}\isanewline
\isakeyword{where}\isanewline
\ \ {\isachardoublequoteopen}crdt{\isacharunderscore}ops\ {\isacharbrackleft}{\isacharbrackright}\ deps{\isachardoublequoteclose}\ {\isacharbar}\isanewline
\ \ {\isachardoublequoteopen}{\isasymlbrakk}crdt{\isacharunderscore}ops\ xs\ deps{\isacharsemicolon}\isanewline
\ \ \ \ oid\ {\isasymnotin}\ set\ {\isacharparenleft}map\ fst\ xs{\isacharparenright}{\isacharsemicolon}\isanewline
\ \ \ \ {\isasymforall}ref\ {\isasymin}\ deps\ oper{\isachardot}\ ref\ {\isasymin}\ set\ {\isacharparenleft}map\ fst\ xs{\isacharparenright}\ {\isasymand}\ ref\ {\isacharless}\ oid\isanewline
\ \ \ {\isasymrbrakk}\ {\isasymLongrightarrow}\ crdt{\isacharunderscore}ops\ {\isacharparenleft}xs\ {\isacharat}\ {\isacharbrackleft}{\isacharparenleft}oid{\isacharcomma}\ oper{\isacharparenright}{\isacharbrackright}{\isacharparenright}\ deps{\isachardoublequoteclose}
\end{isabelle}

\subsection{Specifying List Insertion}

In this section we consider only list insertion.
We model an insertion operation as a pair (\isa{ID, ref}), where \isa{ref} is either \isa{None} (signifying an insertion at the head of the list) or \isa{Some r} (an insertion immediately after a reference element with ID \isa{r}).
If the reference element does not exist, the operation does nothing.

We provide two different definitions of the interpretation function for list insertion: \isa{insert-spec} and \isa{insert-alt}.
The \isa{insert-alt} definition matches the paper, while \isa{insert-spec} uses the Isabelle/HOL list datatype, making it more suitable for formal reasoning.
In section~\ref{sec:insert-alt-equiv} we prove that the two definitions are in fact equivalent.

\begin{isabelle}
\isacommand{fun}\isamarkupfalse%
\ insert{\isacharunderscore}spec\ {\isacharcolon}{\isacharcolon}\ {\isachardoublequoteopen}{\isacharprime}oid\ list\ {\isasymRightarrow}\ {\isacharparenleft}{\isacharprime}oid\ {\isasymtimes}\ {\isacharprime}oid\ option{\isacharparenright}\ {\isasymRightarrow}\ {\isacharprime}oid\ list{\isachardoublequoteclose}\isanewline
\isakeyword{where}\isanewline
\ \ {\isachardoublequoteopen}insert{\isacharunderscore}spec\ xs\ \ \ \ \ {\isacharparenleft}oid{\isacharcomma}\ None{\isacharparenright}\ \ \ \ \ {\isacharequal}\ oid{\isacharhash}xs{\isachardoublequoteclose}\ {\isacharbar}\isanewline
\ \ {\isachardoublequoteopen}insert{\isacharunderscore}spec\ {\isacharbrackleft}{\isacharbrackright}\ \ \ \ \ {\isacharparenleft}oid{\isacharcomma}\ {\isacharunderscore}{\isacharparenright}\ \ \ \ \ \ \ \ {\isacharequal}\ {\isacharbrackleft}{\isacharbrackright}{\isachardoublequoteclose}\ {\isacharbar}\isanewline
\ \ {\isachardoublequoteopen}insert{\isacharunderscore}spec\ {\isacharparenleft}x{\isacharhash}xs{\isacharparenright}\ {\isacharparenleft}oid{\isacharcomma}\ Some\ ref{\isacharparenright}\ {\isacharequal}\isanewline
\ \ \ \ \ {\isacharparenleft}if\ x\ {\isacharequal}\ ref\ then\ x\ {\isacharhash}\ oid\ {\isacharhash}\ xs\isanewline
\ \ \ \ \ \ \ \ \ \ \ \ \ \ \ \ \ else\ x\ {\isacharhash}\ {\isacharparenleft}insert{\isacharunderscore}spec\ xs\ {\isacharparenleft}oid{\isacharcomma}\ Some\ ref{\isacharparenright}{\isacharparenright}{\isacharparenright}{\isachardoublequoteclose}\isanewline
\isanewline
\isacommand{fun}\isamarkupfalse%
\ insert{\isacharunderscore}alt\ {\isacharcolon}{\isacharcolon}\ {\isachardoublequoteopen}{\isacharparenleft}{\isacharprime}oid\ {\isasymtimes}\ {\isacharprime}oid\ option{\isacharparenright}\ set\ {\isasymRightarrow}\ {\isacharparenleft}{\isacharprime}oid\ {\isasymtimes}\ {\isacharprime}oid{\isacharparenright}\ {\isasymRightarrow}\ {\isacharparenleft}{\isacharprime}oid\ {\isasymtimes}\ {\isacharprime}oid\ option{\isacharparenright}\ set{\isachardoublequoteclose}\isanewline
\isakeyword{where}\isanewline
\ \ {\isachardoublequoteopen}insert{\isacharunderscore}alt\ list{\isacharunderscore}rel\ {\isacharparenleft}oid{\isacharcomma}\ ref{\isacharparenright}\ {\isacharequal}\ {\isacharparenleft}\isanewline
\ \ \ \ \ \ if\ {\isasymexists}n{\isachardot}\ {\isacharparenleft}ref{\isacharcomma}\ n{\isacharparenright}\ {\isasymin}\ list{\isacharunderscore}rel\isanewline
\ \ \ \ \ \ then\ {\isacharbraceleft}{\isacharparenleft}p{\isacharcomma}\ n{\isacharparenright}\ {\isasymin}\ list{\isacharunderscore}rel{\isachardot}\ p\ {\isasymnoteq}\ ref{\isacharbraceright}\ {\isasymunion}\ {\isacharbraceleft}{\isacharparenleft}ref{\isacharcomma}\ Some\ oid{\isacharparenright}{\isacharbraceright}\ {\isasymunion}\isanewline
\ \ \ \ \ \ \ \ \ \ \ {\isacharbraceleft}{\isacharparenleft}i{\isacharcomma}\ n{\isacharparenright}{\isachardot}\ i\ {\isacharequal}\ oid\ {\isasymand}\ {\isacharparenleft}ref{\isacharcomma}\ n{\isacharparenright}\ {\isasymin}\ list{\isacharunderscore}rel{\isacharbraceright}\isanewline
\ \ \ \ \ \ else\ list{\isacharunderscore}rel{\isacharparenright}{\isachardoublequoteclose}%
\end{isabelle}

\isa{interp-ins} is the sequential interpretation of a set of insertion operations.
It starts with an empty list as initial state, and then applies the operations from left to right.

\begin{isabelle}
\isacommand{definition}\isamarkupfalse%
\ interp{\isacharunderscore}ins\ {\isacharcolon}{\isacharcolon}\ {\isachardoublequoteopen}{\isacharparenleft}{\isacharprime}oid\ {\isasymtimes}\ {\isacharprime}oid\ option{\isacharparenright}\ list\ {\isasymRightarrow}\ {\isacharprime}oid\ list{\isachardoublequoteclose}\ \isakeyword{where}\isanewline
\ \ {\isachardoublequoteopen}interp{\isacharunderscore}ins\ ops\ {\isasymequiv}\ foldl\ insert{\isacharunderscore}spec\ {\isacharbrackleft}{\isacharbrackright}\ ops{\isachardoublequoteclose}%
\end{isabelle}

\subsubsection{The \isa{insert-ops} predicate}

We now specialise the definitions from section~\ref{sec:abstract-opset} for list insertion.
\isa{insert-opset} is an opset consisting only of insertion operations, and \isa{insert-ops} is the specialisation of the \isa{spec-ops} predicate for insertion operations.

\begin{isabelle}
\isacommand{locale}\isamarkupfalse%
\ insert{\isacharunderscore}opset\ {\isacharequal}\ opset\ opset\ set{\isacharunderscore}option\isanewline
\ \ \isakeyword{for}\ opset\ {\isacharcolon}{\isacharcolon}\ {\isachardoublequoteopen}{\isacharparenleft}{\isacharprime}oid{\isacharcolon}{\isacharcolon}{\isacharbraceleft}linorder{\isacharbraceright}\ {\isasymtimes}\ {\isacharprime}oid\ option{\isacharparenright}\ set{\isachardoublequoteclose}\isanewline
\isanewline
\isacommand{definition}\isamarkupfalse%
\ insert{\isacharunderscore}ops\ {\isacharcolon}{\isacharcolon}\ {\isachardoublequoteopen}{\isacharparenleft}{\isacharprime}oid{\isacharcolon}{\isacharcolon}{\isacharbraceleft}linorder{\isacharbraceright}\ {\isasymtimes}\ {\isacharprime}oid\ option{\isacharparenright}\ list\ {\isasymRightarrow}\ bool{\isachardoublequoteclose}\ \isakeyword{where}\isanewline
\ \ {\isachardoublequoteopen}insert{\isacharunderscore}ops\ list\ {\isasymequiv}\ spec{\isacharunderscore}ops\ list\ set{\isacharunderscore}option{\isachardoublequoteclose}
\end{isabelle}

\subsubsection{Equivalence of the two definitions of insertion}\label{sec:insert-alt-equiv}

We now prove that the two definitions of insertion, \isa{insert-spec} and \isa{insert-alt}, are equivalent.
First we define how to derive the successor relation from an Isabelle list.
This relation contains (\isa{id}, \isa{None}) if \isa{id} is the last element of the list, and (\isa{id1}, \isa{id2}) if \isa{id1} is immediately followed by \isa{id2} in the list.

\begin{isabelle}
\isacommand{fun}\isamarkupfalse%
\ succ{\isacharunderscore}rel\ {\isacharcolon}{\isacharcolon}\ {\isachardoublequoteopen}{\isacharprime}oid\ list\ {\isasymRightarrow}\ {\isacharparenleft}{\isacharprime}oid\ {\isasymtimes}\ {\isacharprime}oid\ option{\isacharparenright}\ set{\isachardoublequoteclose}\isanewline
\isakeyword{where}\isanewline
\ \ {\isachardoublequoteopen}succ{\isacharunderscore}rel\ {\isacharbrackleft}{\isacharbrackright}\ {\isacharequal}\ {\isacharbraceleft}{\isacharbraceright}{\isachardoublequoteclose}\ {\isacharbar}\isanewline
\ \ {\isachardoublequoteopen}succ{\isacharunderscore}rel\ {\isacharbrackleft}head{\isacharbrackright}\ {\isacharequal}\ {\isacharbraceleft}{\isacharparenleft}head{\isacharcomma}\ None{\isacharparenright}{\isacharbraceright}{\isachardoublequoteclose}\ {\isacharbar}\isanewline
\ \ {\isachardoublequoteopen}succ{\isacharunderscore}rel\ {\isacharparenleft}head{\isacharhash}x{\isacharhash}xs{\isacharparenright}\ {\isacharequal}\ {\isacharbraceleft}{\isacharparenleft}head{\isacharcomma}\ Some\ x{\isacharparenright}{\isacharbraceright}\ {\isasymunion}\ succ{\isacharunderscore}rel\ {\isacharparenleft}x{\isacharhash}xs{\isacharparenright}{\isachardoublequoteclose}%
\end{isabelle}

\noindent
interp-alt is the equivalent of interp-ins, but using \isa{insert-alt} instead of \isa{insert-spec}.
To match the paper, it uses a distinct head element to refer to the beginning of the list.

\begin{isabelle}
\isacommand{definition}\isamarkupfalse%
\ interp{\isacharunderscore}alt\ {\isacharcolon}{\isacharcolon}\ {\isachardoublequoteopen}{\isacharprime}oid\ {\isasymRightarrow}\ {\isacharparenleft}{\isacharprime}oid\ {\isasymtimes}\ {\isacharprime}oid\ option{\isacharparenright}\ list\ {\isasymRightarrow}\ {\isacharparenleft}{\isacharprime}oid\ {\isasymtimes}\ {\isacharprime}oid\ option{\isacharparenright}\ set{\isachardoublequoteclose}\isanewline
\isakeyword{where}\isanewline
\ \ {\isachardoublequoteopen}interp{\isacharunderscore}alt\ head\ ops\ {\isasymequiv}\ foldl\ insert{\isacharunderscore}alt\ {\isacharbraceleft}{\isacharparenleft}head{\isacharcomma}\ None{\isacharparenright}{\isacharbraceright}\isanewline
\ \ \ \ \ {\isacharparenleft}map\ {\isacharparenleft}{\isasymlambda}x{\isachardot}\ case\ x\ of\isanewline
\ \ \ \ \ \ \ \ \ \ \ \ {\isacharparenleft}oid{\isacharcomma}\ None{\isacharparenright}\ \ \ \ \ {\isasymRightarrow}\ {\isacharparenleft}oid{\isacharcomma}\ head{\isacharparenright}\ {\isacharbar}\isanewline
\ \ \ \ \ \ \ \ \ \ \ \ {\isacharparenleft}oid{\isacharcomma}\ Some\ ref{\isacharparenright}\ {\isasymRightarrow}\ {\isacharparenleft}oid{\isacharcomma}\ ref{\isacharparenright}{\isacharparenright}\ \isanewline
\ \ \ \ \ \ ops{\isacharparenright}{\isachardoublequoteclose}
\end{isabelle}

\noindent We can now prove that \isa{insert-spec} and \isa{insert-alt} are equivalent:
\begin{isabelle}
\isacommand{theorem}\isamarkupfalse%
\ insert{\isacharunderscore}alt{\isacharunderscore}equivalent{\isacharcolon}\isanewline
\ \ \isakeyword{assumes}\ {\isachardoublequoteopen}insert{\isacharunderscore}ops\ ops{\isachardoublequoteclose}\isanewline
\ \ \ \ \isakeyword{and}\ {\isachardoublequoteopen}head\ {\isasymnotin}\ fst\ {\isacharbackquote}\ set\ ops{\isachardoublequoteclose}\isanewline
\ \ \ \ \isakeyword{and}\ {\isachardoublequoteopen}{\isasymAnd}r{\isachardot}\ Some\ r\ {\isasymin}\ snd\ {\isacharbackquote}\ set\ ops\ {\isasymLongrightarrow}\ r\ {\isasymnoteq}\ head{\isachardoublequoteclose}\isanewline
\ \ \isakeyword{shows}\ {\isachardoublequoteopen}succ{\isacharunderscore}rel\ {\isacharparenleft}head\ {\isacharhash}\ interp{\isacharunderscore}ins\ ops{\isacharparenright}\ {\isacharequal}\ interp{\isacharunderscore}alt\ head\ ops{\isachardoublequoteclose}
\end{isabelle}

\subsection{No Interleaving}\label{appendix:no-interleaving}

The predicate \isa{insert-seq start ops} is true iff \isa{ops} is a list of insertion operations that begins by inserting after \isa{start}, and then continues by placing each subsequent insertion directly after its predecessor.
This definition models the sequential insertion of text at a particular place in a text document.%
\begin{isabelle}
\isacommand{inductive}\isamarkupfalse%
\ insert{\isacharunderscore}seq\ {\isacharcolon}{\isacharcolon}\ {\isachardoublequoteopen}{\isacharprime}oid\ option\ {\isasymRightarrow}\ {\isacharparenleft}{\isacharprime}oid\ {\isasymtimes}\ {\isacharprime}oid\ option{\isacharparenright}\ list\ {\isasymRightarrow}\ bool{\isachardoublequoteclose}\ \isakeyword{where}\isanewline
\ \ {\isachardoublequoteopen}insert{\isacharunderscore}seq\ start\ {\isacharbrackleft}{\isacharparenleft}oid{\isacharcomma}\ start{\isacharparenright}{\isacharbrackright}{\isachardoublequoteclose}\ {\isacharbar}\isanewline
\ \ {\isachardoublequoteopen}{\isasymlbrakk}insert{\isacharunderscore}seq\ start\ {\isacharparenleft}list\ {\isacharat}\ {\isacharbrackleft}{\isacharparenleft}prev{\isacharcomma}\ ref{\isacharparenright}{\isacharbrackright}{\isacharparenright}{\isasymrbrakk}\isanewline
\ \ \ \ \ \ {\isasymLongrightarrow}\ insert{\isacharunderscore}seq\ start\ {\isacharparenleft}list\ {\isacharat}\ {\isacharbrackleft}{\isacharparenleft}prev{\isacharcomma}\ ref{\isacharparenright}{\isacharcomma}\ {\isacharparenleft}oid{\isacharcomma}\ Some\ prev{\isacharparenright}{\isacharbrackright}{\isacharparenright}{\isachardoublequoteclose}
\end{isabelle}

Consider an execution that contains two distinct insertion sequences, \isa{xs} and \isa{ys}, that both begin at the same initial position \isa{start}.
We prove that, provided the starting element exists, the two insertion sequences are not interleaved.
That is, in the final list order, either all insertions by \isa{xs} appear before all insertions by \isa{ys}, or vice versa.

\begin{isabelle}
\isacommand{theorem}\isamarkupfalse%
\ no{\isacharunderscore}interleaving{\isacharcolon}\isanewline
\ \ \isakeyword{assumes}\ {\isachardoublequoteopen}insert{\isacharunderscore}ops\ ops{\isachardoublequoteclose}\isanewline
\ \ \ \ \isakeyword{and}\ {\isachardoublequoteopen}insert{\isacharunderscore}seq\ start\ xs{\isachardoublequoteclose}\ \isakeyword{and}\ {\isachardoublequoteopen}insert{\isacharunderscore}ops\ xs{\isachardoublequoteclose}\isanewline
\ \ \ \ \isakeyword{and}\ {\isachardoublequoteopen}insert{\isacharunderscore}seq\ start\ ys{\isachardoublequoteclose}\ \isakeyword{and}\ {\isachardoublequoteopen}insert{\isacharunderscore}ops\ ys{\isachardoublequoteclose}\isanewline
\ \ \ \ \isakeyword{and}\ {\isachardoublequoteopen}set\ xs\ {\isasymsubseteq}\ set\ ops{\isachardoublequoteclose}\ \isakeyword{and}\ {\isachardoublequoteopen}set\ ys\ {\isasymsubseteq}\ set\ ops{\isachardoublequoteclose}\isanewline
\ \ \ \ \isakeyword{and}\ {\isachardoublequoteopen}distinct\ {\isacharparenleft}map\ fst\ xs\ {\isacharat}\ map\ fst\ ys{\isacharparenright}{\isachardoublequoteclose}\isanewline
\ \ \ \ \isakeyword{and}\ {\isachardoublequoteopen}start\ {\isacharequal}\ None\ {\isasymor}\ {\isacharparenleft}{\isasymexists}r{\isachardot}\ start\ {\isacharequal}\ Some\ r\ {\isasymand}\ r\ {\isasymin}\ set\ {\isacharparenleft}interp{\isacharunderscore}ins\ ops{\isacharparenright}{\isacharparenright}{\isachardoublequoteclose}\isanewline
\ \ \isakeyword{shows}\ {\isachardoublequoteopen}{\isacharparenleft}{\isasymforall}x\ {\isasymin}\ set\ {\isacharparenleft}map\ fst\ xs{\isacharparenright}{\isachardot}\ {\isasymforall}y\ {\isasymin}\ set\ {\isacharparenleft}map\ fst\ ys{\isacharparenright}{\isachardot}\ list{\isacharunderscore}order\ ops\ x\ y{\isacharparenright}\ {\isasymor}\isanewline
\ \ \ \ \ \ \ \ \ {\isacharparenleft}{\isasymforall}x\ {\isasymin}\ set\ {\isacharparenleft}map\ fst\ xs{\isacharparenright}{\isachardot}\ {\isasymforall}y\ {\isasymin}\ set\ {\isacharparenleft}map\ fst\ ys{\isacharparenright}{\isachardot}\ list{\isacharunderscore}order\ ops\ y\ x{\isacharparenright}{\isachardoublequoteclose}
\end{isabelle}

\noindent
For completeness, we also prove what happens if there are two insertion sequences, \isa{xs} and \isa{ys}, but their reference element \isa{start} does not exist.
In this failure case, none of the insertions in \isa{xs} or \isa{ys} take effect.
\begin{isabelle}
\isacommand{theorem}\isamarkupfalse%
\ missing{\isacharunderscore}start{\isacharunderscore}no{\isacharunderscore}insertion{\isacharcolon}\isanewline
\ \ \isakeyword{assumes}\ {\isachardoublequoteopen}insert{\isacharunderscore}ops\ ops{\isachardoublequoteclose}\isanewline
\ \ \ \ \isakeyword{and}\ {\isachardoublequoteopen}insert{\isacharunderscore}seq\ {\isacharparenleft}Some\ start{\isacharparenright}\ xs{\isachardoublequoteclose}\ \isakeyword{and}\ {\isachardoublequoteopen}insert{\isacharunderscore}ops\ xs{\isachardoublequoteclose}\isanewline
\ \ \ \ \isakeyword{and}\ {\isachardoublequoteopen}insert{\isacharunderscore}seq\ {\isacharparenleft}Some\ start{\isacharparenright}\ ys{\isachardoublequoteclose}\ \isakeyword{and}\ {\isachardoublequoteopen}insert{\isacharunderscore}ops\ ys{\isachardoublequoteclose}\isanewline
\ \ \ \ \isakeyword{and}\ {\isachardoublequoteopen}set\ xs\ {\isasymsubseteq}\ set\ ops{\isachardoublequoteclose}\ \isakeyword{and}\ {\isachardoublequoteopen}set\ ys\ {\isasymsubseteq}\ set\ ops{\isachardoublequoteclose}\isanewline
\ \ \ \ \isakeyword{and}\ {\isachardoublequoteopen}start\ {\isasymnotin}\ set\ {\isacharparenleft}interp{\isacharunderscore}ins\ ops{\isacharparenright}{\isachardoublequoteclose}\isanewline
\ \ \isakeyword{shows}\ {\isachardoublequoteopen}{\isasymforall}x\ {\isasymin}\ set\ {\isacharparenleft}map\ fst\ xs{\isacharparenright}\ {\isasymunion}\ set\ {\isacharparenleft}map\ fst\ ys{\isacharparenright}{\isachardot}\ x\ {\isasymnotin}\ set\ {\isacharparenleft}interp{\isacharunderscore}ins\ ops{\isacharparenright}{\isachardoublequoteclose}
\end{isabelle}

\subsection{The Replicated Growable Array (RGA)}\label{appendix:rga}

The RGA algorithm \cite{Roh:2011dw} is a replicated list (or collaborative text-editing) algorithm.
In this section we prove that RGA satisfies our list specification.
The Isabelle/HOL definition of RGA in this section is based on our prior work on formally verifying CRDTs \cite{Gomes:2017gy,Gomes:2017vo}.

\begin{isabelle}
\isacommand{fun}\isamarkupfalse%
\ insert{\isacharunderscore}body\ {\isacharcolon}{\isacharcolon}\ {\isachardoublequoteopen}{\isacharprime}oid{\isacharcolon}{\isacharcolon}{\isacharbraceleft}linorder{\isacharbraceright}\ list\ {\isasymRightarrow}\ {\isacharprime}oid\ {\isasymRightarrow}\ {\isacharprime}oid\ list{\isachardoublequoteclose}\ \isakeyword{where}\isanewline
\ \ {\isachardoublequoteopen}insert{\isacharunderscore}body\ {\isacharbrackleft}{\isacharbrackright}\ \ \ \ \ \ \ e\ {\isacharequal}\ {\isacharbrackleft}e{\isacharbrackright}{\isachardoublequoteclose}\ {\isacharbar}\isanewline
\ \ {\isachardoublequoteopen}insert{\isacharunderscore}body\ {\isacharparenleft}x\ {\isacharhash}\ xs{\isacharparenright}\ e\ {\isacharequal}\isanewline
\ \ \ \ \ {\isacharparenleft}if\ x\ {\isacharless}\ e\ then\ e\ {\isacharhash}\ x\ {\isacharhash}\ xs\isanewline
\ \ \ \ \ \ \ \ \ \ \ \ \ \ \ else\ x\ {\isacharhash}\ insert{\isacharunderscore}body\ xs\ e{\isacharparenright}{\isachardoublequoteclose}\isanewline
\isanewline
\isacommand{fun}\isamarkupfalse%
\ insert{\isacharunderscore}rga\ {\isacharcolon}{\isacharcolon}\ {\isachardoublequoteopen}{\isacharprime}oid{\isacharcolon}{\isacharcolon}{\isacharbraceleft}linorder{\isacharbraceright}\ list\ {\isasymRightarrow}\ {\isacharparenleft}{\isacharprime}oid\ {\isasymtimes}\ {\isacharprime}oid\ option{\isacharparenright}\ {\isasymRightarrow}\ {\isacharprime}oid\ list{\isachardoublequoteclose}\ \isakeyword{where}\isanewline
\ \ {\isachardoublequoteopen}insert{\isacharunderscore}rga\ \ xs\ \ \ \ \ \ {\isacharparenleft}e{\isacharcomma}\ None{\isacharparenright}\ \ \ {\isacharequal}\ insert{\isacharunderscore}body\ xs\ e{\isachardoublequoteclose}\ {\isacharbar}\isanewline
\ \ {\isachardoublequoteopen}insert{\isacharunderscore}rga\ \ {\isacharbrackleft}{\isacharbrackright}\ \ \ \ \ \ {\isacharparenleft}e{\isacharcomma}\ Some\ i{\isacharparenright}\ {\isacharequal}\ {\isacharbrackleft}{\isacharbrackright}{\isachardoublequoteclose}\ {\isacharbar}\isanewline
\ \ {\isachardoublequoteopen}insert{\isacharunderscore}rga\ {\isacharparenleft}x\ {\isacharhash}\ xs{\isacharparenright}\ {\isacharparenleft}e{\isacharcomma}\ Some\ i{\isacharparenright}\ {\isacharequal}\isanewline
\ \ \ \ \ {\isacharparenleft}if\ x\ {\isacharequal}\ i\ then\isanewline
\ \ \ \ \ \ \ \ x\ {\isacharhash}\ insert{\isacharunderscore}body\ xs\ e\isanewline
\ \ \ \ \ \ else\isanewline
\ \ \ \ \ \ \ \ x\ {\isacharhash}\ insert{\isacharunderscore}rga\ xs\ {\isacharparenleft}e{\isacharcomma}\ Some\ i{\isacharparenright}{\isacharparenright}{\isachardoublequoteclose}\isanewline
\isanewline
\isacommand{definition}\isamarkupfalse%
\ interp{\isacharunderscore}rga\ {\isacharcolon}{\isacharcolon}\ {\isachardoublequoteopen}{\isacharparenleft}{\isacharprime}oid{\isacharcolon}{\isacharcolon}{\isacharbraceleft}linorder{\isacharbraceright}\ {\isasymtimes}\ {\isacharprime}oid\ option{\isacharparenright}\ list\ {\isasymRightarrow}\ {\isacharprime}oid\ list{\isachardoublequoteclose}\ \isakeyword{where}\isanewline
\ \ {\isachardoublequoteopen}interp{\isacharunderscore}rga\ ops\ {\isasymequiv}\ foldl\ insert{\isacharunderscore}rga\ {\isacharbrackleft}{\isacharbrackright}\ ops{\isachardoublequoteclose}\isanewline
\isanewline
\isacommand{definition}\isamarkupfalse%
\ rga{\isacharunderscore}ops\ {\isacharcolon}{\isacharcolon}\ {\isachardoublequoteopen}{\isacharparenleft}{\isacharprime}oid{\isacharcolon}{\isacharcolon}{\isacharbraceleft}linorder{\isacharbraceright}\ {\isasymtimes}\ {\isacharprime}oid\ option{\isacharparenright}\ list\ {\isasymRightarrow}\ bool{\isachardoublequoteclose}\ \isakeyword{where}\isanewline
\ \ {\isachardoublequoteopen}rga{\isacharunderscore}ops\ list\ {\isasymequiv}\ crdt{\isacharunderscore}ops\ list\ set{\isacharunderscore}option{\isachardoublequoteclose}
\end{isabelle}

\noindent We can then prove that RGA satisfies our list specification:
\begin{isabelle}
\isacommand{theorem}\isamarkupfalse%
\ rga{\isacharunderscore}meets{\isacharunderscore}spec{\isacharcolon}\isanewline
\ \ \isakeyword{assumes}\ {\isachardoublequoteopen}rga{\isacharunderscore}ops\ xs{\isachardoublequoteclose}\isanewline
\ \ \isakeyword{shows}\ {\isachardoublequoteopen}{\isasymexists}ys{\isachardot}\ set\ ys\ {\isacharequal}\ set\ xs\ {\isasymand}\ insert{\isacharunderscore}ops\ ys\ {\isasymand}\ interp{\isacharunderscore}ins\ ys\ {\isacharequal}\ interp{\isacharunderscore}rga\ xs{\isachardoublequoteclose}
\end{isabelle}

\subsection{Relationship to Strong List Specification}\label{appendix:attiya-spec}

In this section we show that our list specification is stronger than the $\mathcal{A}_\textsf{strong}$ specification of collaborative text editing by Attiya et al.~\cite{Attiya:2016kh}.
We do this by showing that the OpSet interpretation of any set of insertion and deletion operations satisfies all of the consistency criteria that constitute the $\mathcal{A}_\textsf{strong}$ specification.

Attiya et al.'s specification is as follows~\cite{Attiya:2016kh}:

\begin{displayquote}
An abstract execution $A = (H, \textsf{vis})$ belongs to the \emph{strong list specification} $\mathcal{A}_\textsf{strong}$ if and only if there is a relation $\textsf{lo} \subseteq \textsf{elems}(A) \times \textsf{elems}(A)$, called the \emph{list order}, such that:
\begin{enumerate}
\item Each event $e = \mathit{do}(\mathit{op}, w) \in H$ returns a sequence of elements $w=a_0 \dots a_{n-1}$, where $a_i \in \textsf{elems}(A)$, such that
\begin{enumerate}
\item $w$ contains exactly the elements visible to $e$ that have been inserted, but not deleted:
\[ \forall a.\; a \in w \quad\Longleftrightarrow\quad (\mathit{do}(\textsf{ins}(a, \_), \_) \le_\textsf{vis} e)
\;\wedge\; \neg(\mathit{do}(\textsf{del}(a), \_) \le_\textsf{vis} e). \]
\item The order of the elements is consistent with the list order:
\[ \forall i, j.\; (i < j) \;\Longrightarrow\; (a_i, a_j) \in \textsf{lo}. \]
\item Elements are inserted at the specified position:
if $\mathit{op} = \textsf{ins}(a, k)$, then $a = a_{\mathrm{min} \{k,\; n-1\}}$.
\end{enumerate}
\item The list order $\textsf{lo}$ is transitive, irreflexive and total, and thus determines the order of all insert operations in the execution.
\end{enumerate}
\end{displayquote}

This specification considers only insertion and deletion operations, but no assignment.
Moreover, it considers only a single list object, not a graph of composable objects like in our paper.
Thus, we prove the relationship to $\mathcal{A}_\textsf{strong}$ using a simplified interpretation function that defines only insertion and deletion on a single list.

We first define a datatype for list operations, with two constructors: \isa{Insert ref val}, and \isa{Delete ref}.
For insertion, the \isa{ref} argument is the ID of the existing element after which we want to insert, or \isa{None} to insert at the head of the list.
The \isa{val} argument is an arbitrary value to associate with the list element.
For deletion, the \isa{ref} argument is the ID of the existing list element to delete.

\begin{isabelle}
\isacommand{datatype}\isamarkupfalse%
\ {\isacharparenleft}{\isacharprime}oid{\isacharcomma}\ {\isacharprime}val{\isacharparenright}\ list{\isacharunderscore}op\ {\isacharequal}\isanewline
\ \ Insert\ {\isachardoublequoteopen}{\isacharprime}oid\ option{\isachardoublequoteclose}\ {\isachardoublequoteopen}{\isacharprime}val{\isachardoublequoteclose}\ {\isacharbar}\isanewline
\ \ Delete\ {\isachardoublequoteopen}{\isacharprime}oid{\isachardoublequoteclose}%
\end{isabelle}

When interpreting operations, the result is a pair (\isa{list, vals}).
The \isa{list} contains the IDs of list elements in the correct order (equivalent to the list relation in the paper), and \isa{vals} is a mapping from list element IDs to values (equivalent to the element relation in the paper).

Insertion delegates to the previously defined \isa{insert-spec} interpretation function.
Deleting a list element removes it from \isa{vals}.

\begin{isabelle}
\isacommand{fun}\isamarkupfalse%
\ interp{\isacharunderscore}op\ {\isacharcolon}{\isacharcolon}\ {\isachardoublequoteopen}{\isacharparenleft}{\isacharprime}oid\ list\ {\isasymtimes}\ {\isacharparenleft}{\isacharprime}oid\ {\isasymrightharpoonup}\ {\isacharprime}val{\isacharparenright}{\isacharparenright}\ {\isasymRightarrow}\ {\isacharparenleft}{\isacharprime}oid\ {\isasymtimes}\ {\isacharparenleft}{\isacharprime}oid{\isacharcomma}\ {\isacharprime}val{\isacharparenright}\ list{\isacharunderscore}op{\isacharparenright}\isanewline
\ \ \ \ \ \ \ \ \ \ \ \ \ \ \ {\isasymRightarrow}\ {\isacharparenleft}{\isacharprime}oid\ list\ {\isasymtimes}\ {\isacharparenleft}{\isacharprime}oid\ {\isasymrightharpoonup}\ {\isacharprime}val{\isacharparenright}{\isacharparenright}{\isachardoublequoteclose}\ \isakeyword{where}\isanewline
\ \ {\isachardoublequoteopen}interp{\isacharunderscore}op\ {\isacharparenleft}list{\isacharcomma}\ vals{\isacharparenright}\ {\isacharparenleft}oid{\isacharcomma}\ Insert\ ref\ val{\isacharparenright}\ {\isacharequal}\ {\isacharparenleft}insert{\isacharunderscore}spec\ list\ {\isacharparenleft}oid{\isacharcomma}\ ref{\isacharparenright}{\isacharcomma}\ vals{\isacharparenleft}oid\ {\isasymmapsto}\ val{\isacharparenright}{\isacharparenright}{\isachardoublequoteclose}\ {\isacharbar}\isanewline
\ \ {\isachardoublequoteopen}interp{\isacharunderscore}op\ {\isacharparenleft}list{\isacharcomma}\ vals{\isacharparenright}\ {\isacharparenleft}oid{\isacharcomma}\ Delete\ ref\ \ \ \ {\isacharparenright}\ {\isacharequal}\ {\isacharparenleft}list{\isacharcomma}\ vals{\isacharparenleft}ref\ {\isacharcolon}{\isacharequal}\ None{\isacharparenright}{\isacharparenright}{\isachardoublequoteclose}\isanewline
\isanewline
\isacommand{definition}\isamarkupfalse%
\ interp{\isacharunderscore}ops\ {\isacharcolon}{\isacharcolon}\ {\isachardoublequoteopen}{\isacharparenleft}{\isacharprime}oid\ {\isasymtimes}\ {\isacharparenleft}{\isacharprime}oid{\isacharcomma}\ {\isacharprime}val{\isacharparenright}\ list{\isacharunderscore}op{\isacharparenright}\ list\ {\isasymRightarrow}\ {\isacharparenleft}{\isacharprime}oid\ list\ {\isasymtimes}\ {\isacharparenleft}{\isacharprime}oid\ {\isasymrightharpoonup}\ {\isacharprime}val{\isacharparenright}{\isacharparenright}{\isachardoublequoteclose}\isanewline
\isakeyword{where}\isanewline
\ \ {\isachardoublequoteopen}interp{\isacharunderscore}ops\ ops\ {\isasymequiv}\ foldl\ interp{\isacharunderscore}op\ {\isacharparenleft}{\isacharbrackleft}{\isacharbrackright}{\isacharcomma}\ Map{\isachardot}empty{\isacharparenright}\ ops{\isachardoublequoteclose}%
\end{isabelle}

\noindent \isa{list-order ops x y} holds iff, after interpreting the list of operations \isa{ops}, the list element with ID \isa{x} appears before the list element with ID \isa{y} in the resulting list.

\begin{isabelle}
\isacommand{definition}\isamarkupfalse%
\ list{\isacharunderscore}order\ {\isacharcolon}{\isacharcolon}\ {\isachardoublequoteopen}{\isacharparenleft}{\isacharprime}oid\ {\isasymtimes}\ {\isacharparenleft}{\isacharprime}oid{\isacharcomma}\ {\isacharprime}val{\isacharparenright}\ list{\isacharunderscore}op{\isacharparenright}\ list\ {\isasymRightarrow}\ {\isacharprime}oid\ {\isasymRightarrow}\ {\isacharprime}oid\ {\isasymRightarrow}\ bool{\isachardoublequoteclose}\ \isakeyword{where}\isanewline
\ \ {\isachardoublequoteopen}list{\isacharunderscore}order\ ops\ x\ y\ {\isasymequiv}\ {\isasymexists}xs\ ys\ zs{\isachardot}\ fst\ {\isacharparenleft}interp{\isacharunderscore}ops\ ops{\isacharparenright}\ {\isacharequal}\ xs\ {\isacharat}\ {\isacharbrackleft}x{\isacharbrackright}\ {\isacharat}\ ys\ {\isacharat}\ {\isacharbrackleft}y{\isacharbrackright}\ {\isacharat}\ zs{\isachardoublequoteclose}%
\end{isabelle}

\noindent The \isa{make-insert} function generates a new operation for insertion into a given index in a given list.
The exclamation mark is Isabelle's list subscript operator.

\begin{isabelle}
\isacommand{fun}\isamarkupfalse%
\ make{\isacharunderscore}insert\ {\isacharcolon}{\isacharcolon}\ {\isachardoublequoteopen}{\isacharprime}oid\ list\ {\isasymRightarrow}\ {\isacharprime}val\ {\isasymRightarrow}\ nat\ {\isasymRightarrow}\ {\isacharparenleft}{\isacharprime}oid{\isacharcomma}\ {\isacharprime}val{\isacharparenright}\ list{\isacharunderscore}op{\isachardoublequoteclose}\ \isakeyword{where}\isanewline
\ \ {\isachardoublequoteopen}make{\isacharunderscore}insert\ list\ val\ {\isadigit{0}}\ \ \ \ \ \ \ {\isacharequal}\ Insert\ None\ val{\isachardoublequoteclose}\ {\isacharbar}\isanewline
\ \ {\isachardoublequoteopen}make{\isacharunderscore}insert\ {\isacharbrackleft}{\isacharbrackright}\ \ \ val\ k\ \ \ \ \ \ \ {\isacharequal}\ Insert\ None\ val{\isachardoublequoteclose}\ {\isacharbar}\isanewline
\ \ {\isachardoublequoteopen}make{\isacharunderscore}insert\ list\ val\ {\isacharparenleft}Suc\ k{\isacharparenright}\ {\isacharequal}\ Insert\ {\isacharparenleft}Some\ {\isacharparenleft}list\ {\isacharbang}\ {\isacharparenleft}min\ k\ {\isacharparenleft}length\ list\ {\isacharminus}\ {\isadigit{1}}{\isacharparenright}{\isacharparenright}{\isacharparenright}{\isacharparenright}\ val{\isachardoublequoteclose}%
\end{isabelle}

\noindent The \isa{list-ops} predicate is a specialisation of \isa{spec-ops} to the \isa{list-op} datatype: it describes a list of (ID, operation) pairs that is sorted by ID, and can thus be used for the sequential interpretation of the OpSet.

\begin{isabelle}
\isacommand{fun}\isamarkupfalse%
\ list{\isacharunderscore}op{\isacharunderscore}deps\ {\isacharcolon}{\isacharcolon}\ {\isachardoublequoteopen}{\isacharparenleft}{\isacharprime}oid{\isacharcomma}\ {\isacharprime}val{\isacharparenright}\ list{\isacharunderscore}op\ {\isasymRightarrow}\ {\isacharprime}oid\ set{\isachardoublequoteclose}\ \isakeyword{where}\isanewline
\ \ {\isachardoublequoteopen}list{\isacharunderscore}op{\isacharunderscore}deps\ {\isacharparenleft}Insert\ {\isacharparenleft}Some\ ref{\isacharparenright}\ {\isacharunderscore}{\isacharparenright}\ {\isacharequal}\ {\isacharbraceleft}ref{\isacharbraceright}{\isachardoublequoteclose}\ {\isacharbar}\isanewline
\ \ {\isachardoublequoteopen}list{\isacharunderscore}op{\isacharunderscore}deps\ {\isacharparenleft}Insert\ \ None\ \ \ \ \ \ {\isacharunderscore}{\isacharparenright}\ {\isacharequal}\ {\isacharbraceleft}{\isacharbraceright}{\isachardoublequoteclose}\ \ \ \ {\isacharbar}\isanewline
\ \ {\isachardoublequoteopen}list{\isacharunderscore}op{\isacharunderscore}deps\ {\isacharparenleft}Delete\ \ ref\ \ \ \ \ \ \ \ {\isacharparenright}\ {\isacharequal}\ {\isacharbraceleft}ref{\isacharbraceright}{\isachardoublequoteclose}\isanewline
\isanewline
\isacommand{locale}\isamarkupfalse%
\ list{\isacharunderscore}opset\ {\isacharequal}\ opset\ opset\ list{\isacharunderscore}op{\isacharunderscore}deps\isanewline
\ \ \isakeyword{for}\ opset\ {\isacharcolon}{\isacharcolon}\ {\isachardoublequoteopen}{\isacharparenleft}{\isacharprime}oid{\isacharcolon}{\isacharcolon}{\isacharbraceleft}linorder{\isacharbraceright}\ {\isasymtimes}\ {\isacharparenleft}{\isacharprime}oid{\isacharcomma}\ {\isacharprime}val{\isacharparenright}\ list{\isacharunderscore}op{\isacharparenright}\ set{\isachardoublequoteclose}\isanewline
\isanewline
\isacommand{definition}\isamarkupfalse%
\ list{\isacharunderscore}ops\ {\isacharcolon}{\isacharcolon}\ {\isachardoublequoteopen}{\isacharparenleft}{\isacharprime}oid{\isacharcolon}{\isacharcolon}{\isacharbraceleft}linorder{\isacharbraceright}\ {\isasymtimes}\ {\isacharparenleft}{\isacharprime}oid{\isacharcomma}\ {\isacharprime}val{\isacharparenright}\ list{\isacharunderscore}op{\isacharparenright}\ list\ {\isasymRightarrow}\ bool{\isachardoublequoteclose}\ \isakeyword{where}\isanewline
\ \ {\isachardoublequoteopen}list{\isacharunderscore}ops\ ops\ {\isasymequiv}\ spec{\isacharunderscore}ops\ ops\ list{\isacharunderscore}op{\isacharunderscore}deps{\isachardoublequoteclose}%
\end{isabelle}

\subsubsection{Satisfying all conditions of $\mathcal{A}_\textsf{strong}$}

Part 1(a) of Attiya et al.'s specification states that whenever the list is observed, the elements of the list are exactly those that have been inserted but not deleted.
$\mathcal{A}_\textsf{strong}$ uses the visibility relation $\le_\textsf{vis}$ to capture the operations known to a node at some arbitrary point in the execution; in the OpSet model, we can simply prove the theorem for an arbitrary OpSet, since the contents of the OpSet at a particular time on a particular node correspond exactly to the set of operations known to that node at that time.

\begin{isabelle}
\isacommand{theorem}\isamarkupfalse%
\ inserted{\isacharunderscore}but{\isacharunderscore}not{\isacharunderscore}deleted{\isacharcolon}\isanewline
\ \ \isakeyword{assumes}\ {\isachardoublequoteopen}list{\isacharunderscore}ops\ ops{\isachardoublequoteclose}\isanewline
\ \ \ \ \isakeyword{and}\ {\isachardoublequoteopen}interp{\isacharunderscore}ops\ ops\ {\isacharequal}\ {\isacharparenleft}list{\isacharcomma}\ vals{\isacharparenright}{\isachardoublequoteclose}\isanewline
\ \ \isakeyword{shows}\ {\isachardoublequoteopen}a\ {\isasymin}\ dom\ {\isacharparenleft}vals{\isacharparenright}\ {\isasymlongleftrightarrow}\ {\isacharparenleft}{\isasymexists}ref\ val{\isachardot}\ {\isacharparenleft}a{\isacharcomma}\ Insert\ ref\ val{\isacharparenright}\ {\isasymin}\ set\ ops{\isacharparenright}\ {\isasymand}\isanewline
\hspace{47mm}{\isacharparenleft}{\isasymnexists}i{\isachardot}\ {\isacharparenleft}i{\isacharcomma}\ Delete\ a{\isacharparenright}\ {\isasymin}\ set\ ops{\isacharparenright}{\isachardoublequoteclose}
\end{isabelle}

Part 1(b) states that whenever the list is observed, the order of list elements is consistent with the global list order.
We can define the global list order simply as the list order that arises from interpreting the OpSet containing all operations in the entire execution.
Then, at any point in the execution, the OpSet is some subset of the set of all operations.

We can then rephrase condition 1(b) as follows: whenever list element \isa{x} appears before list element \isa{y} in the interpretation of \isa{some-ops}, then for any OpSet \isa{all-ops} that is a superset of \isa{some-ops}, \isa{x} must also appear before \isa{y} in the interpretation of \isa{all-ops}.
In other words, adding more operations to the OpSet does not change the relative order of any existing list elements.

\begin{isabelle}
\isacommand{theorem}\isamarkupfalse%
\ list{\isacharunderscore}order{\isacharunderscore}consistent{\isacharcolon}\isanewline
\ \ \isakeyword{assumes}\ {\isachardoublequoteopen}list{\isacharunderscore}ops\ some{\isacharunderscore}ops{\isachardoublequoteclose}\ \isakeyword{and}\ {\isachardoublequoteopen}list{\isacharunderscore}ops\ all{\isacharunderscore}ops{\isachardoublequoteclose}\isanewline
\ \ \ \ \isakeyword{and}\ {\isachardoublequoteopen}set\ some{\isacharunderscore}ops\ {\isasymsubseteq}\ set\ all{\isacharunderscore}ops{\isachardoublequoteclose}\isanewline
\ \ \ \ \isakeyword{and}\ {\isachardoublequoteopen}list{\isacharunderscore}order\ some{\isacharunderscore}ops\ x\ y{\isachardoublequoteclose}\isanewline
\ \ \isakeyword{shows}\ {\isachardoublequoteopen}list{\isacharunderscore}order\ all{\isacharunderscore}ops\ x\ y{\isachardoublequoteclose}
\end{isabelle}

Part 1(c) states that inserted elements appear at the specified position: that is, immediately after an insertion of \isa{oid} at index \isa{k}, the list index \isa{k} does indeed contain \isa{oid} (provided that \isa{k} is less than the length of the list).
We prove this property below.

\begin{isabelle}
\isacommand{theorem}\isamarkupfalse%
\ correct{\isacharunderscore}position{\isacharunderscore}insert{\isacharcolon}\isanewline
\ \ \isakeyword{assumes}\ {\isachardoublequoteopen}list{\isacharunderscore}ops\ {\isacharparenleft}ops\ {\isacharat}\ {\isacharbrackleft}{\isacharparenleft}oid{\isacharcomma}\ ins{\isacharparenright}{\isacharbrackright}{\isacharparenright}{\isachardoublequoteclose}\isanewline
\ \ \ \ \isakeyword{and}\ {\isachardoublequoteopen}ins\ {\isacharequal}\ make{\isacharunderscore}insert\ {\isacharparenleft}fst\ {\isacharparenleft}interp{\isacharunderscore}ops\ ops{\isacharparenright}{\isacharparenright}\ val\ k{\isachardoublequoteclose}\isanewline
\ \ \ \ \isakeyword{and}\ {\isachardoublequoteopen}list\ {\isacharequal}\ fst\ {\isacharparenleft}interp{\isacharunderscore}ops\ {\isacharparenleft}ops\ {\isacharat}\ {\isacharbrackleft}{\isacharparenleft}oid{\isacharcomma}\ ins{\isacharparenright}{\isacharbrackright}{\isacharparenright}{\isacharparenright}{\isachardoublequoteclose}\isanewline
\ \ \isakeyword{shows}\ {\isachardoublequoteopen}list\ {\isacharbang}\ {\isacharparenleft}min\ k\ {\isacharparenleft}length\ list\ {\isacharminus}\ {\isadigit{1}}{\isacharparenright}{\isacharparenright}\ {\isacharequal}\ oid{\isachardoublequoteclose}
\end{isabelle}

Part 2 states that the list order relation must be transitive, irreflexive, and total.
These three properties are straightforward to prove, using our definition of the \isa{list-order} predicate.

\begin{isabelle}
\isacommand{theorem}\isamarkupfalse%
\ list{\isacharunderscore}order{\isacharunderscore}trans{\isacharcolon}\isanewline
\ \ \isakeyword{assumes}\ {\isachardoublequoteopen}list{\isacharunderscore}ops\ ops{\isachardoublequoteclose}\isanewline
\ \ \ \ \isakeyword{and}\ {\isachardoublequoteopen}list{\isacharunderscore}order\ ops\ x\ y{\isachardoublequoteclose}\isanewline
\ \ \ \ \isakeyword{and}\ {\isachardoublequoteopen}list{\isacharunderscore}order\ ops\ y\ z{\isachardoublequoteclose}\isanewline
\ \ \isakeyword{shows}\ {\isachardoublequoteopen}list{\isacharunderscore}order\ ops\ x\ z{\isachardoublequoteclose}\isanewline
\isanewline
\isacommand{theorem}\isamarkupfalse%
\ list{\isacharunderscore}order{\isacharunderscore}irrefl{\isacharcolon}\isanewline
\ \ \isakeyword{assumes}\ {\isachardoublequoteopen}list{\isacharunderscore}ops\ ops{\isachardoublequoteclose}\isanewline
\ \ \isakeyword{shows}\ {\isachardoublequoteopen}{\isasymnot}\ list{\isacharunderscore}order\ ops\ x\ x{\isachardoublequoteclose}\isanewline
\isanewline
\isacommand{theorem}\isamarkupfalse%
\ list{\isacharunderscore}order{\isacharunderscore}total{\isacharcolon}\isanewline
\ \ \isakeyword{assumes}\ {\isachardoublequoteopen}list{\isacharunderscore}ops\ ops{\isachardoublequoteclose}\isanewline
\ \ \ \ \isakeyword{and}\ {\isachardoublequoteopen}x\ {\isasymin}\ set\ {\isacharparenleft}fst\ {\isacharparenleft}interp{\isacharunderscore}ops\ ops{\isacharparenright}{\isacharparenright}{\isachardoublequoteclose}\isanewline
\ \ \ \ \isakeyword{and}\ {\isachardoublequoteopen}y\ {\isasymin}\ set\ {\isacharparenleft}fst\ {\isacharparenleft}interp{\isacharunderscore}ops\ ops{\isacharparenright}{\isacharparenright}{\isachardoublequoteclose}\isanewline
\ \ \ \ \isakeyword{and}\ {\isachardoublequoteopen}x\ {\isasymnoteq}\ y{\isachardoublequoteclose}\isanewline
\ \ \isakeyword{shows}\ {\isachardoublequoteopen}list{\isacharunderscore}order\ ops\ x\ y\ {\isasymor}\ list{\isacharunderscore}order\ ops\ y\ x{\isachardoublequoteclose}
\end{isabelle}

\fi
\end{document}